\newcommand{\ws}{\text{ }}
\newcommand{\expect}{{\rm I\kern-.3em E}}
\newcommand{\Var}{\mathrm{Var}\ws}
\newcommand{\snorm}[1]{||#1||_{\rm s}}
\newcommand{\norm}[1]{||#1||}
\newcommand{\abs}[1]{\left|#1\right|}
\let\vec\bm 
\newcommand{\vt}{\vec\theta}
\newcommand{\grad}{\nabla}
\newcommand{\expectation}[1]{\mathbb{E}\left[#1\right]}
\declaretheorem[name=Definition,Refname={Definition,Definitions}]{definition}
\declaretheorem[name=Theorem,Refname={Theorem,Theorems}]{thm}
\declaretheorem[name=Corollary,Refname={Corollary,Corollaries}]{cor}
\declaretheoremstyle[
    headfont=\bfseries, 
    notebraces={[}{]},
    bodyfont=\normalfont,
    headpunct={},
    postheadspace=\newline,
    spacebelow=\parsep,
    spaceabove=\parsep 
]{case}
\newcommand*\l@sectioninfo{\@nodottedtocline{1}{1.5em}{2.3em}}
\newcommand*\l@subsectioninfo{\@nodottedtocline{2}{3.8em}{3.2em}}
\def\@nodottedtocline#1#2#3#4#5{%
  \ifnum #1>\c@tocdepth \else
    \vskip \z@ \@plus.2\p@
    {\leftskip #2\relax \rightskip \@tocrmarg \parfillskip -\rightskip
     \parindent #2\relax\@afterindenttrue
     \interlinepenalty\@M
     \leavevmode
     \@tempdima #3\relax
     \advance\leftskip \@tempdima \null\nobreak\hskip -\leftskip
     {#4}\nobreak
     \leaders\hbox{$\m@th
        \mkern \@dotsep mu\hbox{\,}\mkern \@dotsep
        mu$}\hfill
     \nobreak
     \hb@xt@\@pnumwidth{\hfil\normalfont \normalcolor }%
     \par}%
  \fi}
\def\sectioninfo#1{%
    \addcontentsline{toc}{sectioninfo}{%
    \noexpand\numberline{}\textit{#1}}%
}
\def\subsectioninfo#1{%
    \addcontentsline{toc}{subsectioninfo}{%
    \noexpand\numberline{}\textit{#1}}%
}
\begin{document}
\title{Optimal Measurement of Field Properties with Quantum Sensor Networks}
\author{Timothy Qian}
\affiliation{Joint Center for Quantum Information and Computer Science, NIST/University of Maryland College Park, Maryland 20742, USA}
\affiliation{Joint Quantum Institute, NIST/University of Maryland College Park, Maryland 20742, USA}
\affiliation{Montgomery Blair High School, Silver Spring, Maryland 20901, USA}
\author{Jacob Bringewatt}
\affiliation{Joint Center for Quantum Information and Computer Science, NIST/University of Maryland College Park, Maryland 20742, USA}
\affiliation{Joint Quantum Institute, NIST/University of Maryland College Park, Maryland 20742, USA}

\author{Igor Boettcher}
\affiliation{Joint Quantum Institute, NIST/University of Maryland College Park, Maryland 20742, USA}

\author{Przemyslaw Bienias}
\affiliation{Joint Center for Quantum Information and Computer Science, NIST/University of Maryland College Park, Maryland 20742, USA}
\affiliation{Joint Quantum Institute, NIST/University of Maryland College Park, Maryland 20742, USA}

\author{Alexey V. Gorshkov}
\affiliation{Joint Center for Quantum Information and Computer Science, NIST/University of Maryland College Park, Maryland 20742, USA}
\affiliation{Joint Quantum Institute, NIST/University of Maryland College Park, Maryland 20742, USA}

\date{\today}

\begin{abstract}
    We consider a quantum sensor network of qubit sensors coupled to a field $f(\vec{x};\vec{\theta})$ analytically parameterized by the vector of parameters $\vec\theta$. The qubit sensors are fixed at positions $\vec{x}_1,\dots,\vec{x}_d$. While the functional form of $f(\vec{x};\vec{\theta})$ is known, the parameters $\vec{\theta}$ are not. We derive saturable bounds on the precision of measuring an arbitrary analytic function $q(\vec{\theta})$ of these parameters  and construct the optimal protocols that achieve these bounds. Our results are obtained from a combination of techniques from quantum information theory and duality theorems for linear programming. They can be applied to many problems, including optimal placement of quantum sensors, field interpolation, and the measurement of functionals of parametrized fields.
\end{abstract}
\maketitle

It is well established that entangled probes in quantum metrology can be used to obtain more accurate measurements than unentangled probes \cite{bollinger1996optimal, huelga1997improvement, paris2009quantum, pezze2009entanglement, toth2012multipartite, zhang2014quantum}. In particular, while measurements of a single parameter using $d$ unentangled probes asymptotically obtain a mean squared error (MSE) from the true value of order $\mathcal{O}(1/d)$, using $d$ maximally entangled probes, each coupled independently to the parameter, one obtains an MSE of order $\mathcal{O}(1/d^2)$ -- the so-called Heisenberg limit \cite{wineland1992spin, bollinger1996optimal}. More recently, understanding the role of entanglement and generalizing this scaling advantage to the measurement of multiple parameters at once or functions of those parameters has been an area of keen interest \cite{gao2014bounds, zhang2014quantum, ragy2016compatibility, Eldredge2018, gessner2018sensitivity,proctor2018multiparameter, altenburg2018multi, zhuang2018distributed, Qian2019, gatto2019distributed, albarelli2019evaluating, sekatski2020optimal, sidhu2020geometric, guo2020distributed, oh2020optimal, zhuang2020distributed, rubio2020quantum} due to a wide array of practical applications \cite{spagnolo2012quantum, genoni2013optimal, humphreys2013quantum, yue2014quantum, baumgratz2016quantum, sidhu2017quantum, kok2017role}.
Importantly, optimal bounds and protocols have been derived for measuring analytic functions of independent parameters, each coupled to a qubit sensor in a so-called quantum sensor network \cite{Qian2019}. The problem of directly measuring a spatially dependent field of known form, possibly with extra noise sources, has also been considered \cite{sekatski2020optimal}.

In this Letter, we consider the following very general problem that is relevant for many technological applications of quantum sensor networks. A set of quantum sensors at positions $\{\vec{x}_1,\dots,\vec{x}_d\}$ is locally probing a physical field $f(\vec{x};\vec{\theta})$, which depends on a set of parameters $\vec{\theta}\in\mathbb{R}^k$, where we have used boldface to denote vectors. We assume that we know the functional form of $f(\vec{x};\vec{\theta})$, but we do not know the values of the parameters $\vec{\theta}$. For instance, these parameters may be the positions of several known charges, and $f(\vec x; \vec\theta)$ one of the components of the resulting electric field. Our objective is to measure a function of the parameters $q(\vec{\theta})$. This could be, for instance, the field value $q(\vec{\theta})=f(\vec{x}_0;\vec{\theta})$ at a position $\vec{x}_0$ without sensor, or the spatial average $q(\vec{\theta})=\int_R\mbox{d}\vec{x}\ f(\vec{x};\vec{\theta})$ over some region  $R$ of interest. In the following, we derive saturable bounds on the precision for measuring $q(\vec{\theta})$  using quantum entanglement. The setup is depicted in \cref{fig:sensornetwork}.

\begin{figure}[t]
    \centering
    \includegraphics[width=0.9\columnwidth]{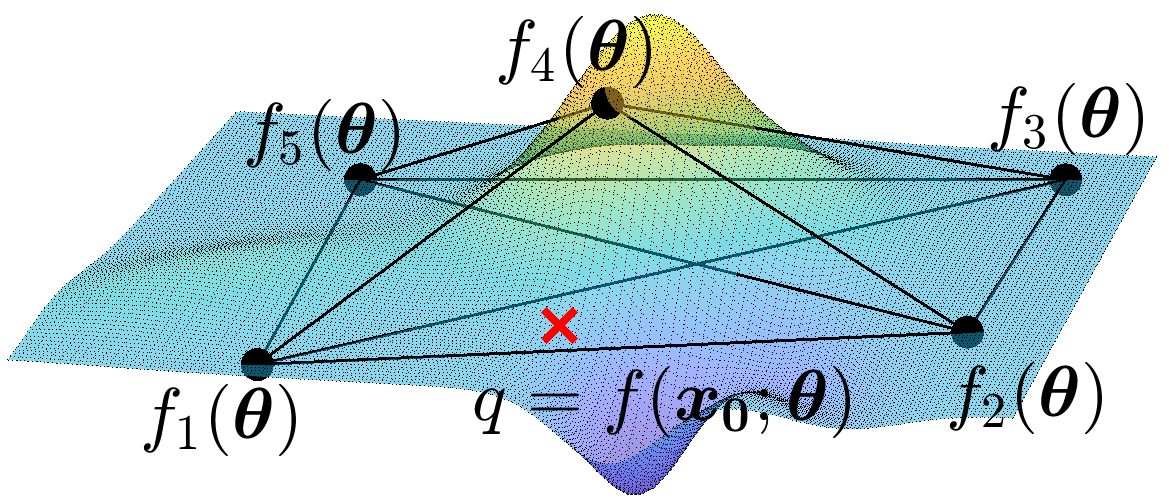}
    \caption{At each position $\vec{x}_i$ in the network, a quantum sensor (black dots) is coupled to a field $f(\vec{x};\vec{\theta})$, whose functional form is known, but the parameters $\vec{\theta}$ are not. The protocols presented here utilize entanglement to obtain the highest accuracy allowed by quantum mechanics in estimating the quantity $q(\vec{\theta})$. One example problem that our work solves is to estimate the field value  $q=f(\vec{x}_0;\vec{\theta})$ at a location $\vec{x}_0$ (red cross) without a sensor.}
    \label{fig:sensornetwork}
\end{figure}

As a more concrete example, consider a network of three quantum sensors that are locally coupled to a field $f(\vec{x};\theta_1,\theta_2)$ parametrized by $\vt=(\theta_1,\theta_2)$. The field amplitudes at the positions of the sensors shall be $f_1=\theta_1$, $f_2=\theta_2$, $f_3=\theta_1+\theta_2$, respectively, where we have introduced the shorthand notation $f_i(\vt)=f(\vec{x_i};\vt)$. Assume we want to measure the value of $q(\theta_1,\theta_2)=\theta_1$. One possible strategy is to simply use the first sensor to measure $f_1$. On the other hand, we could also measure $\frac{1}{2}(f_1-f_2+f_3)$, thereby potentially gaining accuracy by harnessing entanglement between the individual sensors. In fact, there are infinitely many variations of the second strategy, and we eventually expect some of them to be superior to the first strategy.

In contrast to previous work \cite{Qian2019}, where one considers estimating a given function $F(f_1,\dots,f_d)$ of independent local field amplitudes $f_1,\dots,f_d$, we consider here the problem of estimating a function of the parameters, $q(\theta_1,\dots,\theta_k$), instead. Due to the correlation of the local field amplitudes, there are many measurement strategies that need to be considered and compared in terms of accuracy. In this Letter, we determine the optimal protocol for this very general setup. Our work has a variety of important applications, including optimal spatial sensor placement and field interpolation.

\textit{Problem setup.---}We formally consider a quantum sensor network as a collection of $d$ quantum subsystems, called sensors, each associated with a Hilbert space $\mathcal{H}_i$ \cite{proctor2017networked, proctor2018multiparameter}. The full Hilbert space is $\mathcal{H}=\bigotimes_{i=1}^d\mathcal{H}_i$. We imprint a collection of field amplitudes $\vec {f}(\vt)=(f_1(\vt), \dots, f_d(\vt))^T$ onto a quantum state, represented by an initial density matrix $\rho_{\rm in}$, through the unitary evolution $\rho_{\rm f}=U(\vec{f})\rho_{\rm in} U(\vec{f})^\dagger$. Here, $\vt = (\theta_1, \dots, \theta_k)^T$ is a set of independent unknown parameters.
To be specific, we consider qubit sensors and a unitary evolution generated by the Hamiltonian
\begin{equation}\label{eq:H}
    \hat{H}=\hat{H}_{\rm c}(t)+\sum_{i=1}^d\frac{1}{2}f_i(\vt)\hat{\sigma}_i^{z},
\end{equation}
with 
$\hat{\sigma}_i^{x,y,z}$ the Pauli operators acting on qubit $i$
and $f_i(\vec{\theta})=f(\vec{x}_i,\vec{\theta})$ the local field amplitude at the position of the $i^{\mathrm{th}}$ sensor. Our results apply to more general quantum sensor networks (see Outlook).  The term  $\hat{H}_{\rm c}(t)$ is a time-dependent control Hamiltonian that we choose, which may include coupling to ancilla qubits. 

Our goal is to estimate a given function of the parameters $q(\vt)$. The estimate is based on measurements of the final state $\rho_{\rm f}$, specified by a set of operators $\{\hat\Pi_\xi\}$ that constitute a positive operator-valued measure (POVM) with $\int \mbox{d}\xi\ \hat{\Pi}_\xi=1$. We repeat this experiment many times and estimate the function of interest $q(\vt)$ via an estimator $\tilde{q}$ obtained from the data. On a more technical level, we assume that the sensor placements allow us to obtain an estimate of the true value of $\vec{\theta}$, which ensures the problem is solvable \footnote{Formally, we assume the ability to make an asymptotically (in  time $t$ per measurement run and in the number of measurement runs $\mu$)  unbiased, arbitrarily-small-variance estimate. See Supplemental Material for detailed definitions \cite{qian2020supplement}}.  This implies that 
the number $d$ of quantum sensors  should be larger than $k$. (See Outlook for cases where we can violate this assumption.)  
The choice of initial state $\rho_{\rm in}$,  control Hamiltonian $\hat{H}_{\rm c}(t)$, and POVM $\{\hat{\Pi}_\xi\}$ defines a \emph{protocol to estimate $q(\vec{\theta})$}.

Before proceeding, let us fix some notation. We treat $\vt$ as a stochastic variable and denote the true value of $\vt$ by $\vt'$. Thus $q(\vt)$ is again a stochastic quantity, whereas $q(\vt')$ is a specific number obtained by evaluating the function at the true value $\vt'$. We use indices $i,j=1,\dots,d$ to label quantum sensors and $m,n=1,\dots,k$ to label parameters. 

The MSE of the estimate $\tilde{q}$ from the true value $q(\vec{\theta}')$ is given by
\begin{equation}\label{eq:figofmerit}
\mathcal{M} = \expectation{\left(\tilde q-q(\vec\theta')\right)^2}=\Var \tilde q+\left(\expectation{\tilde q}-q(\vec\theta')\right)^2,
\end{equation}
where the first and second terms are the variance and estimate bias, respectively. We define the optimal protocol to measure $q(\vec{\theta}')$ as the one that minimizes $\mathcal{M}$ given a fixed amount of total time $t$. To determine the optimal protocol, we first derive lower bounds on $\mathcal{M}$ using techniques from quantum information theory. We then construct specific protocols that saturate these bounds.

\textit{MSE bound.---}In this section, we derive a saturable lower bound on $\mathcal{M}$ that can be achieved in time $t$ \footnote{Technically, to saturate our bounds, one requires $\mu$ measurements and thus a total time of $\mu t$ over all experimental runs. However, we avoid this technicality for notational clarity.}. To derive our bound, we begin with the following result on single-parameter estimation from Ref.~\cite{Boixo2007}. If the unitary evolution of the quantum state is controlled by a single parameter $q$, then \begin{equation}\label{eq:boixo}
    \mathcal{M} \geq \frac{1}{\mathcal{F}_Q} \geq \frac{1}{t^2\snorm{\hat{h}_q}^2},
\end{equation}
where $\mathcal{F}_Q$ is the quantum Fisher information, $\hat{h}_q=\partial \hat{H}/\partial q$ is the generator with $\gamma_{\max}$ ($\gamma_{\min}$) its largest (smallest) eigenvalue, and $\snorm{\hat{h}_q}=\gamma_{\max}-\gamma_{\min}$ is the seminorm of $\hat{h}_q$. The first inequality is the quantum Cram\'{e}r--Rao bound \cite{holevo2011probabilistic, helstrom1976quantum, braunstein1994statistical, braunstein1996generalized}. 

It is not obvious that Eq.\  (\ref{eq:boixo}) may be applied to the problem of estimating $q(\vt)$ as we have $k>1$ parameters controlling the evolution of the state. However, we circumvent this issue by considering an infinite set of imaginary scenarios, each corresponding to a choice of artificially fixing $k-1$ degrees of freedom and leaving only $q(\vt)$ free to vary. Under any such choice, our final quantum state depends on a single parameter, and we  can apply \cref{eq:boixo} to the imaginary scenario under consideration. 

We note that any such imaginary scenario requires giving ourselves information that we do not have in reality. However, additional information can only result in a lower value of $\mathcal{M}$. Therefore, any lower bound on $\mathcal{M}$ derived from any of the imaginary scenarios is also a lower bound for estimating the function $q(\vec{\theta})$. For a bound derived this way to be saturable, there must be some choice(s) of artificially fixing $k-1$ degrees of freedom that does not give us \emph{any} useful information about $q(\vt)$, and thus yields the sharpest possible bound. This is, in fact, the case. In our analysis below, the existence of such a choice becomes self-evident since we present a protocol that achieves the tightest bound. However, in the Supplemental Material, we prove that such a choice exists purely on information theoretic grounds \cite{qian2020supplement}. 

More formally, consider a basis $\{\vec\alpha_1, \vec\alpha_2, \cdots, \vec\alpha_k\}$ such that, without loss of generality, $\vec\alpha_1=\nabla q(\vec{\theta}')=:\vec\alpha$. We then consider any choice of the remaining basis vectors. For any such choice, let $\vec\alpha_n$ correspond to a function $q_n(\vt)=\vec\alpha_n\cdot\vt$. Therefore, if we consider a particular choice of basis, we are also considering a corresponding set of functions $\{q_1(\vt)=q(\vt), q_2(\vt), \cdots, q_k(\vt)\}$. We suppose we are given the values $\{q_n(\vt')\}_{n\geq 2}$, fixing $k-1$ degrees of freedom. The resulting problem is now determined by a single parameter,  and Eq.\ (\ref{eq:boixo}) applies. 

The derivative of $H$ with respect to $q$, while holding $q_2,\dots,q_k$ fixed, is
\begin{equation}
\label{eqn:generatorcomputation}
    \hat{h}_q= \frac{\partial\hat H}{\partial q}\Bigr|_{q_2,\dots,q_k} =\sum_{i = 1}^d \frac 12 (\nabla f_i(\vt')\cdot \vec\beta)\hat\sigma_i^z,
\end{equation}
where $\vec\beta = \left(\frac{\partial \theta_1}{\partial q}, \dots, \frac{\partial \theta_{k}}{\partial q} \right)|_{q_2,\dots,q_k}$. Using the chain rule, we find that $\vec{\beta}$ satisfies $\vec\alpha\cdot\vec\beta = 1$.

As we show formally in the Supplemental Material \cite{qian2020supplement}, every $\vec\beta\in \mathbb{R}^k$ in Eq.~(\ref{eqn:generatorcomputation}) corresponds to a valid choice of the $k-1$ dimensional subspace spanned by $\{\vec\alpha_n\}_{n\geq 2}$. 
 
Therefore, since $\hat{h}_q$ depends on 
$\{\vec\alpha_n\}_{n\geq 2}$ only through $\beta$,   
the tightest bound on $\mathcal{M}$ is found by optimizing over arbitrary choices of $\vec{\beta}$ subject to the constraint $\vec\alpha\cdot\vec\beta = 1$.

To formulate the corresponding optimization problem, define the matrix $G$ by
\begin{equation}
    \label{eqn:gradientmatrix}
    G_{im}(\vt') = \frac{\partial f_i}{\partial\theta_m}(\vt').
\end{equation}
We emphasize that $G$ depends on the true value of the parameters $\vt'$. 
Utilizing $\snorm{\frac 12 \hat\sigma_i^z} = 1$, we write the seminorm of $\hat{h}_q$ as
\begin{equation}
    \label{seminormevaluation}
    \snorm{\hat{h}_q} = \sum_{i = 1}^{d}\abs{ \nabla f_i(\vt')\cdot\vec\beta} = \norm{G(\vt')\vec\beta}_1,
\end{equation}
with $||\vec{x}||_1=\sum_{i=1}^d|x_i|$ the $L^1$ or Manhattan norm. Therefore, for any $\vec\beta$ satisfying $\vec\alpha\cdot\vec\beta = 1$, we have
\begin{equation}
    \label{eqn:generalmsebound}
    \mathcal{M} \geq  \frac1{t^2\snorm{\hat{h}_q}^2} = \frac 1{t^2\norm{G(\vt')\vec\beta}_1^2}.
\end{equation}
In order to obtain the sharpest bound, we must solve what we refer to as the bound problem for $G(\vt')$ and $\vec\alpha$:
\begin{center}
\begin{minipage}{0.4\textwidth}
\vspace{.5em}
\noindent \emph{\textbf{Bound problem:} Given a non-zero vector \  $\vec\alpha\in\mathbb{R}^k$ and a real $d\times k$ matrix $G$, compute $u=\max\limits_{\vec\beta}\frac{1}{\norm{G\vec \beta}_1}$ under the condition $\vec{\alpha}\cdot\vec{\beta}=1$.} 
\vspace{.5em}
\end{minipage}
\end{center}
\noindent This is a linear programming problem and can in general be solved in  time that is polynomial in $d$ and $k$ (see, e.g., Ref.\  \cite{jiang2020faster}). Hereafter, we refer to the resulting sharpest bound as ``the bound".

\textit{Optimal protocol.---}We now turn to the problem of providing a protocol that saturates this bound. For clarity of presentation, we develop this protocol in the case that both the field $\vec{f}(\vt)$  and the objective $q(\vt)$ are linear in the parameters $\vec \theta$; that is,
$\vec{f}(\vt) =G \vt$, with $\vt$-independent $G$,
and  $q(\vec\theta)=\vec\alpha\cdot\vec\theta$. However, the existence of an asymptotically optimal protocol can be proven in the more general case that $\vec f(\vec\theta)$ and $q(\vt)$ are analytic in the neighborhood of the true value $\vt'$ \cite{qian2020supplement}.

For the linear case, we propose an explicit protocol to measure $q$ and show that it saturates the bound and thus is optimal. The optimal protocol measures the linear combination
\begin{equation}\label{eq:lambda}
    \lambda(\vec{f})=\vec w\cdot \vec{f},
\end{equation}
where $\vec{f}$ is the vector of local field amplitudes. The vector $\vec w\in\mathbb{R}^d$ is chosen such that $\tilde{\lambda}(\vec{f})=\tilde{q}(\vt)$ is an unbiased estimator of $q(\vec\theta')$, and  will be optimized to saturate the bound. (We note that, for $d> k$, there are many choices of $\vec w$ that satisfy $\lambda=q$.)

For the estimator $\tilde{\lambda}$ to be unbiased, we must have $\expectation{\tilde q}=q(\vt')=\vec\alpha\cdot\vec\theta'$. This is achieved by choosing $\vec w$ to satisfy the \emph{consistency condition}
\begin{align}
  \label{eqn:consistency} 
  G^T \vec w = \vec\alpha.
\end{align}
Indeed, this implies
\begin{equation}\label{eqn:consistency2}
\begin{split}
    \expectation{\tilde q}&=\expectation{\vec w\cdot\vec{f}}=\left(G\vec\theta'\right)^T\vec w=\vec\theta'\cdot\left(G^T\vec w\right)=\vec\alpha\cdot\vec\theta'.
\end{split}
\end{equation}
We prove in the supplement that, under our assumption that we can estimate $\vt'$, \cref{eqn:consistency} may always be satisfied for some $\vec w$, and therefore our protocol is valid.

For any such choice of $\vec w$, we use the optimal linear protocol of Ref.~\cite{Eldredge2018} -- which for completeness, we summarize in the Supplemental Material \cite{qian2020supplement} -- to measure $\lambda(\vec{f})$.  
The variance obtained by this protocol is 
\begin{equation}
    \label{eqn:linearprotocolvariance}
    \Var \tilde q = \frac{\norm{\vec w}_\infty^2}{t^2},
\end{equation}
where $||\vec{w}||_\infty = \text{max}_i |w_i|$. Since we are dealing with an unbiased estimator, the MSE coincides with the variance of the estimator in \cref{eqn:linearprotocolvariance}. In order to find $\vec w$ with the lowest possible value of $\norm{\vec{w}}_\infty$ (i.e.\ the smallest variance), we must solve what we refer to as the protocol problem:
\begin{center}
\begin{minipage}{0.4\textwidth}
\vspace{.5em}
\emph{\textbf{Protocol problem:} Given a non-zero vector $\vec\alpha\in\mathbb{R}^k$ and a real $d\times k$ matrix $G$, compute $u'=\min\limits_{\vec{w}}\norm{\vec w}_\infty$ under the condition $G^T\vec w=\vec \alpha$.}
\vspace{.5em}
\end{minipage}
\end{center}
\noindent This, again, can be efficiently solved by generic linear programming algorithms \cite{panik2013linear,jiang2020faster} or special-purpose algorithms \cite{cadzow1973finite, cadzow1974efficient, abdelmalek1977minimum}. 

To show that the optimal protocol from solving this problem saturates the bound, we now show that the bound problem and protocol problem are equivalent in that $u=u'$. For this, we utilize the strong duality theorem for linear programming \cite{Luenberger1969, cadzow1973finite} \footnote{See Ref.~\cite{albarelli2019evaluating} for a quantum sensing use of this theorem in the context of evaluating the Holevo Cram\'{e}r--Rao bound.}. It states that, for linear programming problems like the protocol problem, there is a dual problem whose solution is identical to the original problem. In our case, we have the following dual problem:
\begin{center}
\begin{minipage}{0.4\textwidth}
\vspace{.5em}
\emph{\textbf{Dual protocol problem:} Given a non-zero vector $\vec\alpha\in\mathbb{R}^k$ and a real $d\times k$ matrix $G$, compute $u''=\max\limits_{\vec{v}}\ \vec\alpha\cdot\vec v$ under the condition $\norm{G\vec v}_1\leq 1$.}
\vspace{.5em}
\end{minipage}
\end{center}
\noindent The strong duality theorem then implies $u''=u'$. Additionally, there is a correspondence between the two solution vectors $\vec{w^0}$ and $\vec{v^0}$, so that, given the solution vector to one problem, we can find the solution vector to the other \cite{Luenberger1969, cadzow1973finite}. We now prove the following theorem.
\begin{thm}\label{thm:equiv}
Let $u$ and $u'$ be the solutions to the bound and protocol problems, respectively. Then $u=u'$.
\end{thm}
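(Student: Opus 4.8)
The plan is to obtain the theorem from the strong duality theorem already invoked above, which gives $u'=u''$ for the protocol problem and its dual. It then suffices to prove $u=u''$, i.e.\ that the bound problem and the dual protocol problem have equal optimal value. The underlying reason is that these two problems are related by a dilation: the bound problem pins the functional $\vec\alpha\cdot\vec\beta$ to $1$ and minimizes $\norm{G\vec\beta}_1$, while the dual protocol problem caps $\norm{G\vec v}_1$ at $1$ and maximizes $\vec\alpha\cdot\vec v$; rescaling a feasible point of one problem produces a feasible point of the other with reciprocal objective value.

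First I would record two facts that follow from our standing assumption that the consistency condition $G^T\vec w=\vec\alpha$ is solvable, i.e.\ $\vec\alpha\in\mathrm{range}(G^T)=(\ker G)^\perp$. (i) Whenever $\vec\alpha\cdot\vec\beta=1$ we have $G\vec\beta\neq\vec 0$, since $G\vec\beta=\vec 0$ would put $\vec\beta\in\ker G$ and force $\vec\alpha\cdot\vec\beta=0$; hence $\norm{G\vec\beta}_1>0$ on the feasible set of the bound problem, so $u$ is finite. (ii) Picking any $\vec\beta_0$ with $\vec\alpha\cdot\vec\beta_0=1$ (possible since $\vec\alpha\neq\vec 0$) and setting $\vec v_0=\vec\beta_0/\norm{G\vec\beta_0}_1$ gives a point feasible for the dual protocol problem with $\vec\alpha\cdot\vec v_0=1/\norm{G\vec\beta_0}_1>0$; in particular $u''>0$, and the supremum defining $u''$ is unchanged if we restrict to feasible $\vec v$ with $\vec\alpha\cdot\vec v>0$.

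Then I would prove the two inequalities by the rescaling just described. For $u\leq u''$: given any $\vec\beta$ with $\vec\alpha\cdot\vec\beta=1$, the vector $\vec v=\vec\beta/\norm{G\vec\beta}_1$ satisfies $\norm{G\vec v}_1=1$, hence is feasible for the dual protocol problem, and $\vec\alpha\cdot\vec v=1/\norm{G\vec\beta}_1\leq u''$; taking the supremum over $\vec\beta$ yields $u\leq u''$. For $u\geq u''$: given any $\vec v$ feasible for the dual protocol problem with $\vec\alpha\cdot\vec v>0$, the vector $\vec\beta=\vec v/(\vec\alpha\cdot\vec v)$ satisfies $\vec\alpha\cdot\vec\beta=1$, hence is feasible for the bound problem, and $\norm{G\vec\beta}_1=\norm{G\vec v}_1/(\vec\alpha\cdot\vec v)\leq 1/(\vec\alpha\cdot\vec v)$, so $1/\norm{G\vec\beta}_1\geq\vec\alpha\cdot\vec v$; taking the supremum over such $\vec v$, which by (ii) equals $u''$, yields $u\geq u''$. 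Combining, $u=u''=u'$.

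I do not expect a genuine obstacle here; the only care required is the degenerate-case bookkeeping in the previous paragraph --- guaranteeing $\norm{G\vec\beta}_1\neq 0$ on the feasible set and that $u''>0$, so that division by $\vec\alpha\cdot\vec v$ is legitimate --- both of which rest on the assumed solvability of the consistency condition. If one additionally wants the correspondence of optimizers $\vec w^0\leftrightarrow\vec v^0$ mentioned before the theorem, one notes that after quotienting by $\ker G$ (on which both $\vec\alpha\cdot{}$ and $\norm{G\,\cdot}_1$ are insensitive) the relevant feasible sets are closed with compact objective level sets, so the optima are attained and the dilation above matches them up; but for the bare identity $u=u'$ the supremizing-sequence version used above already suffices.
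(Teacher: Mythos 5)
Your proof is correct, but it routes one of the two inequalities differently from the paper. Both arguments establish $u\geq u'$ the same way: rescale a feasible point of the dual protocol problem by $\vec\alpha\cdot\vec v$ to land on the constraint surface $\vec\alpha\cdot\vec\beta=1$ of the bound problem (the paper does this only at the optimizer $\vec{v^0}$, you do it for arbitrary feasible $\vec v$ and take suprema). For the reverse inequality, however, the paper does not go through the dual at all: it applies H\"older's inequality to the optimal $\vec{w^0}$ of the primal protocol problem, $1=\vec{w^0}\cdot(G\vec\beta)\leq\norm{\vec{w^0}}_\infty\norm{G\vec\beta}_1$, which is a direct weak-duality-type estimate between the bound problem and the protocol problem. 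You instead observe that the bound problem and the dual protocol problem are literally the same problem up to the reciprocal rescaling $\vec v\leftrightarrow\vec\beta/\norm{G\vec\beta}_1$, so both inequalities follow from homogeneity and all the linear-programming content collapses into the single invocation of strong duality $u''=u'$. Your version is more symmetric and, usefully, it closes a gap the paper leaves open: the paper's first inequality carries the proviso ``provided $\norm{G\vec{\beta^0}}_1\neq 0$,'' whereas you derive $\norm{G\vec\beta}_1>0$ on the feasible set from $\vec\alpha\in\mathrm{range}(G^T)=(\ker G)^\perp$, i.e.\ from the solvability of the consistency condition, and you also handle the sign issue $\vec\alpha\cdot\vec v\leq 0$ explicitly. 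What the paper's H\"older step buys in exchange is a byproduct you do not get for free: the identification of $\vec{\beta^0}=\vec{v^0}/u'$ as the optimizer of the bound problem, which the paper uses in the worked example. Your closing remark about attainment via compactness after quotienting by $\ker G$ is correct but, as you say, not needed for the bare identity.
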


\begin{proof}
\noindent 
By the strong duality theorem, the solution of the dual protocol problem satisfies $u''=\max_{\vec v}\vec\alpha\cdot\vec v =u'$. Let the corresponding solution vector of the dual protocol problem be $\vec{v^0}$. Define $\vec\beta^0:=\vec{v^0}/u'$. We have $\vec\alpha\cdot\vec{\beta^0}=u'/u'=1$, thus $\vec{\beta^0}$ satisfies the constraint of the bound problem. To prove the theorem, we show that $u'\leq u$ and $u\leq u'$. On the one hand, provided $\norm{G\vec{\beta^0}}_1\neq 0$, the condition $\norm{G \vec {v^0}}_1\leq 1$ of the dual problem implies
\begin{equation}\label{eq:boundsat}
     u' \leq \frac{1}{\norm{G\vec{\beta^0}}_1} \leq \max_{\vec{\beta}} \frac{1}{||G \vec{\beta}||_1}=u.
\end{equation}
On the other hand, for \emph{any} $\vec \beta$ satisfying the constraint $\vec\alpha\cdot \vec\beta$ of the bound problem, and for the optimal $\vec w=\vec{w^0}$ of the protocol problem satisfying $\norm{\vec{ w^0}}_\infty=u'$, H\"older's inequality yields
\begin{align}
 \nonumber  &1 = \vec\alpha\cdot \vec\beta = (G^T \vec{w^0})^T \vec\beta = \vec{w^0} \cdot (G\vec\beta) \leq \norm{\vec{w^0}}_\infty \norm{G\vec\beta}_1\\
 \label{eqn:holderinequalitybound}    &\implies \frac 1{\norm{G\vec\beta}_1}\leq \norm{\vec{w^0}}_\infty=u'\ \text{for all}\ \vec \beta.
\end{align}
This shows that $u'\geq 1/||G\vec{\beta}||_1$ for all $\vec{\beta}$, thus $u'\geq u$, which completes the proof. As a byproduct, we learn from Eq.~(\ref{eq:boundsat}) that $\vec{\beta^0}$ maximizes $1/||G\vec{\beta}||_1$, and so is the solution vector of the bound problem. 
\end{proof}

\noindent Theorem \ref{thm:equiv} implies that the protocol measuring $\lambda$ with optimal $\vec w$ saturates the bound.

As an instructive example of how our three problems relate, we return to the toy model presented in the introduction. Consider three sensors coupled to local field amplitudes $f_1(\vt)=\theta_1$, $f_2(\vt)=\theta_2$, and $f_3(\vt)=\theta_1+\theta_2$. Our objective is $q(\vt)=\theta_1$, so $\vec\alpha=(1,0)^T$. We have 
\begin{equation}
    G^T=\begin{pmatrix}
    1 & 0 & 1 \\
    0 & 1 & 1
    \end{pmatrix}.
\end{equation}
First consider the bound problem. The constraint $\vec\alpha\cdot\vec\beta=1$ implies $\vec\beta=(1, b)^T$ with arbitrary $b$. The maximum of $1/\norm{G\vec\beta}_1$ is achieved for $\vec{\beta^0}=(1,0)^T$, yielding $u=1/2$. For the protocol problem, the constraint in Eq.~(\ref{eqn:consistency}) gives $w_1+w_2=1$ and $w_2+w_3=0$. The corresponding minimal value of $\norm{\vec w}_\infty$ is $u'=1/2$ for $\vec{w^0}=\left(\frac{1}{2}, -\frac{1}{2}, \frac{1}{2}\right)^T$. Finally, for the dual protocol problem, the constraint $\norm{G\vec v}_1\leq 1$ implies $|v_1|+|v_2|+|v_1+v_2|\leq 1$. The solution vector is $\vec{v^0}=(1/2,0)^T$, which yields $u''=\vec\alpha\cdot\vec{v^0}=1/2$. This explicit example demonstrates that $u=u'=u''$. Furthermore, as noted in the proof of Theorem \ref{thm:equiv}, $\vec{\beta^0}=\vec{v^0}/u'$.

\textit{Applications.---}Having derived optimal bounds and protocols saturating them, we now discuss some applications.  We begin by considering the same example as above and show that, remarkably, our results in this case indicate that the best entangled and best unentangled weighting strategies need not be the same. With or without entanglement, we estimate $q(\vt)=\theta_1$ by measuring a linear combination $\vec w\cdot\vec f$ with the constraints $w_1+w_3=1$, $w_2+w_3=0$. Without entanglement, our only option is to measure each component of $\vec f$ independently in parallel for time $t$, yielding a total MSE for $q(\vt)$ of $\norm{\vec w}_2^2/t^2$. In stark contrast, for the entangled case, the MSE is given by $\norm{\vec w}_\infty^2/t^2$. It is easy to see that minimizing the Euclidean and supremum norm of $\vec w$, subject to our constraints, does not yield the same result: Without entanglement, $\vec w=\left(\frac{2}{3}, -\frac{1}{3},\frac{1}{3} \right)^T$ is optimal, yielding an MSE of $\frac{2}{3t^2}$. With entanglement, $\vec w=\left(\frac{1}{2}, -\frac{1}{2}, \frac{1}{2}\right)^T$ is optimal, with MSE of $\frac{1}{4t^2}$. 
This simple example shows that, to achieve the optimal result with entanglement, one cannot in general use the weights $\vec w$ that are optimal without entanglement.  

Our results are practically relevant for any situation where one knows the functional form of the field of interest $f(\vec{x};\vec{\theta})$ and seeks to determine some quantity dependent on the parameters of that field. Examples include functionals of the form $q(\vec{\theta})=\int_R\mbox{d}\vec{x}\ k(\vec{x}) f(\vec{x};\theta)$ with any kernel $k(\vec{x})$ and region of integration $R$. The examples from the introduction  
correspond to $k(\vec{x})=\delta(\vec{x}-\vec{x}_0)$ and $k(\vec{x})=1$. Since the $\vec{\theta}$-dependence of $f(\vec{x},\vec{\theta})$ is analytic, this amounts to evaluating an analytic function $q(\vt)$. 

Our findings are also relevant for determining the optimal placement of sensors in space, i.e. determining the best locations $\vec{x}_1,\dots,\vec{x}_d$ in the control space $X$ in which they reside. For example, if the sensors are confined to a plane, then $X=\mathbb{R}^2$. This problem clearly consists of two parts: (1) evaluating the best possible MSE for any chosen set of sensor locations and (2) optimizing the result over possible locations. The MSE amounts to the cost function in usual optimization problems. Our results solve this first part as it would be used in the inner loop of a numerical optimization algorithm. The full problem, involving also the second part, is a high dimensional optimization in a space of dimension $d\times\mathrm{dim}(X)$. Therefore, in general, one expects that finding the global optimal placement could be quite challenging. However, even finding a local optimum in this space is clearly of practical use. 

\textit{Outlook.---}While we assumed that we can obtain an individual estimate of the true value $\vt'$ of the parameters, one could imagine situations where this assumption is not satisfied. Some such systems are underdetermined and not uniquely solvable, but in some cases we can reparametrize $\vt\rightarrow \vt^*$ in order to satisfy the assumption. For example, if two parameters in the initial parametrization always appear as a product $\theta_1\theta_2$ in both $f$ and $q$, we cannot individually estimate $\theta_1$ or $\theta_2$. However, we can reparametrize $\theta_1\theta_2\rightarrow \theta_1^*$ and thus satisfy our initial assumption.

Our work applies to physical settings beyond qubit sensors -- that is, any situation where \cref{eq:boixo}, may be applied our results should hold, provided we use the corresponding seminorm for the particular coupling. One example is using a collection of $d$ Mach-Zehnder interferometers where the role of local fields is played by interferometer phases \cite{holland1993interferometric, kim1998influence, devi2009quantum, dinani2016quantum, ge2018}. Here the limiting resource is the number of photons $N$ available to distribute among interferometers and not the total time $t$. The optimal variance for measuring a linear combination of local field values in this setting is conjectured in Ref.~\cite{Eldredge2018}. Under the assumption that this conjecture is correct, we may replace Eq.\ (\ref{eqn:linearprotocolvariance}) with $
    \mathcal{M} = \frac{\norm{\vec w}_\infty^2}{N^2}$
and otherwise everything remains the same as the qubit sensor case. One could also consider the entanglement-enhanced continuous-variable protocol of  Ref.~\cite{zhuang2018distributed} for measuring linear combinations of field-quadrature displacements. A variation of this protocol has been experimentally implemented in Ref.~\cite{guo2020distributed}.
We expect our bound and protocol 
could be extended to all the scenarios just described or even to the hybrid case where some local fields couple to qubits, some to Mach-Zehnder interferometers, and some to field quadratures. The ultimate attainable limit in such physical settings remains an open question, however.

One could 
consider 
the case $d<k$ provided the $d$ sensors are not required to be at fixed locations. For instance, if one had access to continuously movable sensors in a 1D control space $X$, by the Riesz representation theorem \cite{Luenberger1969}, one could encode any linear functional of $f(x; \vec\theta)$ by moving the sensors according to a particular corresponding velocity schedule. As a simple example, one can consider evaluating the integral of some function of (one component of) a magnetic field over one-dimensional physical space by moving a qubit sensor through the field and measuring the accumulated phase. One could also consider variations of this work in the context of 
semiparametric estimation \cite{tsang2020quantum}. We leave further exploration of such schemes to future work.

We thank Pradeep Niroula for discussions. We acknowledge funding by ARL CDQI, NSF PFC at JQI, AFOSR MURI, AFOSR, ARO MURI, NSF PFCQC program, DoE ASCR Accelerated Research in Quantum Computing program (award No.~DE-SC0020312), the DoE ASCR Quantum Testbed Pathfinder program (award No.~DE-SC0019040), and the U.S. Department of Energy Award No.~DE-SC0019449. J.B. acknowledges support by the U.S. DoE, Office of Science, DoE ASCR, DoE CSGF (award No.~DE-SC0019323). 

\let\oldaddcontentsline\addcontentsline
\renewcommand{\addcontentsline}[3]{}

\let\addcontentsline\oldaddcontentsline

\onecolumngrid
\vspace{2em}
\begin{center}
\textbf{Supplemental Material for ``Optimal Measurement of Field Properties with Quantum Sensor Networks''} 
\end{center}
\setcounter{secnumdepth}{1}
\setcounter{thm}{0}
\renewcommand{\thethm}{S.\arabic{thm}}
\renewcommand{\thelemma}{S.\arabic{lemma}}
\renewcommand{\thecor}{S.\arabic{cor}}
\setcounter{equation}{0}
\renewcommand{\theequation}{S.\arabic{equation}}
\renewcommand{\bibnumfmt}[1]{[S#1]}
\renewcommand{\citenumfont}[1]{S#1}
{
  \hypersetup{linkcolor=black}
  \tableofcontents
}

\section{Justification of using single-parameter bound}
\sectioninfo{We show that, by artificially fixing $k-1$ degrees of freedom, we can use the single-parameter bound in Eq.\ (\ref{eq:boixo}) of the main text. We show that any choice of \texorpdfstring{$\vec\beta\in\mathbb{R}^k$}{filler} satisfying \texorpdfstring{ $\vec\alpha\cdot\vec\beta=1$}{filler} corresponds to a valid choice of the $k-1$ dimensional subspace spanned by \texorpdfstring{ $\{\vec\alpha_n\}_{n=\{2,\dots,k\}}$}{filler}. We provide an information-theoretic proof that there is a choice of such a subspace that ensures we obtain no information about \texorpdfstring{ $q(\vt')$}{filler}.}

In this section, we elaborate on artificially fixing $k-1$ degrees of freedom in order to use the single-parameter bound in Eq.\ (\ref{eq:boixo}) in the main text. We begin by showing that any choice of $\vec\beta\in\mathbb{R}^k$ satisfying $\vec\alpha\cdot\vec\beta=1$ picks out a valid choice of a $k-1$ dimensional subspace that  $\{\vec\alpha_n\}_{n=\{2,\dots,k\}}$ must span such that the full set $\{\vec\alpha_n\}_{n=\{1,\dots,k\}}$ is a valid basis. 

We begin by noting that,  formally, the basis of vectors $\{\vec\alpha_n\}_{n=\{1,\dots,k\}}$ corresponds to the rows of the Jacobian matrix $J = [\vec{\alpha_1}, \dots, \vec{\alpha_k}]^T$ of the coordinate transformation between $\vec\theta'$ and $\vec q=(q_1(\vt'), q_2(\vt'), \cdots, q_k(\vt') )^T$. Further, there exists a dual basis of vectors  $\{\vec\beta_n\}_{n=\{1,\dots,k\}}$ corresponding to the columns of the inverse  Jacobian matrix $J^{-1} = [\vec{\beta_1}, \vec{\beta_2}, \dots,\vec{\beta_k}]$. Therefore, $\vec{\alpha_n}\cdot\vec{\beta_m}=\delta_{nm}$ since $JJ^{-1} = I$. In particular, we have
\begin{equation}
    \label{eqn:basisconditionagain}
    \vec{\alpha_1}\cdot\vec{\beta_1} = \vec\alpha\cdot\vec\beta = 1.
\end{equation}
Recall that $\vec\alpha=\grad q(\vt')$ is fixed by the quantity we desire to measure. Furthermore, we assume without loss of generality that $q_1(\vt)=q(\vt)$. Now suppose we pick \emph{any} $\vec\beta\in\mathbb{R}^k$ satisfying Eq.\ (\ref{eqn:basisconditionagain}). If there is a valid basis $\{\vec\alpha_n\}_{n=\{1,\dots,k\}}$ corresponding to this choice, we require that the $k-1$ vectors $\{\vec\alpha_n\}_{n=\{2,\dots,k\}}$ span the orthogonal complement of $\vec\beta$. Furthermore, we require that these vectors be independent of $\vec\alpha$. This is clearly true for any valid basis $\{\vec\alpha_n\}_{n=\{2,\dots,k\}}$ for the orthogonal complement of $\vec\beta$ as $\vec\alpha$ is not in this subspace via Eq.\ (\ref{eqn:basisconditionagain}). Therefore, we have reduced the problem to that of picking the optimal choice of $\vec\beta\in\mathbb{R}^k$. 

We now show via information theoretic arguments that there is such a choice of $\vec\beta$ that gives us no useful information about $q$ and that therefore the sharpest bound obtained by optimizing over $\vec\beta$ is in fact saturable. 
Let $\mathcal{F}(\vec\theta)=(\vec{\ell_1}, \vec{\ell_2}, \cdots, \vec{\ell_k})=(\vec{\ell_1}, \vec{\ell_2}, \cdots, \vec{\ell_k})^T$ be the Fisher information matrix with respect to the parameters $\vec\theta$, where we have explicitly indicated it is symmetric. Then we may use the previously defined Jacobian to obtain the Fisher information matrix with respect to $\vec q=(q_1(\vt')=q(\vt'), q_2(\vt'), \cdots, q_k(\vt') )^T$:
\begin{equation}
    \mathcal{F}(\vec q)=(J^{-1})^T\mathcal{F}(\vec\theta')J^{-1}.
\end{equation}
We note that, if $\mathcal{F}(\vec q)_{1n}=\mathcal{F}(\vec q)_{n1}=0$ for all $n\neq 1$,  then there is no information about the desired $q_1(\vec\theta')$ in the other $q_{n\neq 1}(\vec\theta')$. Therefore, if our bound is saturable, it must be possible to construct such an $\mathcal{F}(\vec q)$.

Let $\vec\alpha_1=\vec\alpha=(a_1, \cdots, a_k)^T$. Since we know there is a protocol  saturating our bound and since we know what it is, we propose the ansatz
\begin{equation}\label{eq:ansatz}
\vec{\beta_1}=\left(\frac{1}{a_1}, 0,\cdots, 0\right)^T.
\end{equation}
Eq.\ (\ref{eq:ansatz}) clearly satisfies $\vec\alpha_1\cdot\vec\beta_1=1$. Furthermore, we pick some choice of remaining basis vectors $\{\vec\alpha_n\}_{n=\{2,\dots,k\}}$ such that
\begin{equation}
\vec\alpha_{n\neq 1}^T=(0, \vec{v_n}^T),
\end{equation}
which satisfy $\vec\alpha_{n\neq1}\cdot\vec\beta_1=0$. Define the $(k-1)\times(k-1)$-dimensional matrix $V=(\vec v_2, \vec v_3, \cdots, \vec v_k)^T$. Then define $U^T=V^{-1}$. Therefore, letting $U=(\vec u_2, \vec u_3, \cdots, \vec u_k)^T$, we have $\vec u_n\cdot \vec v_m=\delta_{mn}$. Defining $\vec a=(a_2, a_3, \cdots, a_k)^T$, we then can pick 
\begin{equation}
\vec\beta_{n\neq 1}^T=\left(\frac{-\vec{u_n}\cdot\vec a}{a_1}, \vec{u_n}^T\right ),
\end{equation}
which clearly satisfies $\vec\alpha_n\cdot\vec\beta_m=\delta_{nm}$. We then have 
\begin{equation}
    \mathcal{F}(\vec q)_{n1}=\vec\beta_n^TF(\vt')\vec\beta_1= \vec\beta_n^T\frac{\vec{\ell_1}}{a_1}=\frac{\vec{\ell_1}\cdot\vec{\beta_n}}{a_1}=\mathcal{F}(\vec q)_{1n}.
\end{equation}
The above equation implies 
\begin{equation}
    \mathcal{F}(\vec q)_{11}= \frac{\ell_{11}}{a_1^2}, 
\end{equation}
where we let $\ell_{11}$ denote the first component of $\vec{\ell_1}$. Furthermore, if our choice of basis is to give us no information, we must have, for $n\neq 1$,
\begin{equation}
    \mathcal{F}(\vec q)_{1n}=\vec{u_n}\cdot\left(-\frac{\ell_{11}}{a_1}\vec a+\vec{\ell_1}'\right)=0,
\end{equation}
where $\vec{\ell_1}'=(\ell_{12}, \ell_{13}, \cdots, \ell_{1k})^T$. In other words, this gives the off-diagonal elements zero. It is impossible to have $k$ linearly independent vectors $\vec{u}_n$ all orthogonal to $\left(-\frac{\ell_{11}}{a_1}\vec a+\vec{\ell_1}'\right)$ in a $k$ dimensional space, so we demand
\begin{equation}
    -\frac{\ell_{11}}{a_1}\vec a+\vec{\ell_1}'=0.
\end{equation}
We note that, if $\mathcal{F}(\vec\theta)$ is diagonal,  this is impossible to satisfy. However, if $\vec{\ell_1}\propto\vec\alpha_1$, this is satisfied, which is in fact what is done in the linear protocol from Ref.~\cite{Eldredge2018} that we use as a subroutine in our protocol. Also note that $a_1$ must be the maximum-magnitude element of $\vec\alpha$ for this to be satisfiable due to the properties of the Fisher information matrix (namely $\ell_{11}\geq\ell_{1n}$ for $n\neq 1$). Without loss of generality, we let $a_1$ be this maximum-value element as the order of indexing our sensors is arbitrary. Therefore, we see that, by insisting that fixing $k-1$ degrees of freedom gives us no useful information, the protocol in Ref.~\cite{Eldredge2018} emerges naturally. 

We note that one can find a somewhat related argument regarding the results of Ref. \cite{Eldredge2018} in Ref. \cite{proctor2017networked}. 

\section{Proof of validity of the consistency condition}
\sectioninfo{We collect some definitions for estimators and prove that, provided we can estimate \texorpdfstring{ $\vt'$}{filler}, the consistency condition \texorpdfstring{$G^T({\vec\theta'})\vec w =\vec\alpha$ }{filler} [Eq.\ (\ref{eqn:consistency}) in the main text] is satisfied for some \texorpdfstring{ $\vec w$}{filler}. }

Here we prove that, provided we can estimate $\vt'$, the consistency condition $G^T({\vec\theta'})\vec w =\vec\alpha$ [Eq.\ (\ref{eqn:consistency}) in the main text] is satisfied for some $\vec w$. We use this result in Sec.~\ref{appendix:convergence} to prove that using $\tilde\vt$ (instead of $\vt'$) in the second step of the analytic-function-case protocol of Sec.~\ref{app:analyticprotocol}  induces negligible errors.

We begin by recalling a standard definition. 
\begin{definition}
An asymptotically unbiased estimator $\tilde\vt$ of $\vt$ is one that asymptotically (in time $t$ and the number of measurements $\mu$) has $\expect[\tilde\vt] = \vt$.
\end{definition}
\noindent We can now prove the following theorems.
\begin{thm}\label{thm:orthogonalcomplement}
If we can make an asymptotically unbiased estimate of a function $q(\vt')$ with arbitrarily small variance, then all $\vec\beta$ in the null space of $G(\vec\theta')$ lie in the orthogonal complement of $\vec\alpha = \nabla q(\vt')$.
\end{thm}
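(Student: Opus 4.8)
The plan is to show that every $\vec\beta$ in $\ker G(\vt')$ is a direction along which the final state $\rho_{\rm f}$, and hence the outcome statistics of every fixed measurement, is stationary at $\vt'$ to first order; an estimator that concentrates on the true value of $q$ then cannot track $q$ along such a $\vec\beta$ unless $q$ is itself stationary there, i.e.\ unless $\vec\alpha\cdot\vec\beta=0$, which is exactly the claim.

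Concretely, I would fix a protocol ($\rho_{\rm in}$, $\hat H_{\rm c}(t)$, POVM $\{\hat\Pi_\xi\}$), take $\vec\beta\in\ker G(\vt')$ so that $(G(\vt')\vec\beta)_i=\grad f_i(\vt')\cdot\vec\beta=0$ for every sensor $i$, and note that along the line $\vt(s)=\vt'+s\vec\beta$ one has $\tfrac{d}{ds}f_i(\vt(s))|_{s=0}=0$ for all $i$. Since $\rho_{\rm f}$ depends on $\vt$ only through the field amplitudes via $\rho_{\rm f}=U(\vec f)\rho_{\rm in}U(\vec f)^\dagger$ with $U$ smooth in $\vec f$, the chain rule gives $\tfrac{d}{ds}\rho_{\rm f}(\vt(s))|_{s=0}=0$; equivalently, the generator of translations in the $\vec\beta$ direction, $\sum_m\beta_m\hat h_{\theta_m}=\tfrac{1}{2}\sum_i(G(\vt')\vec\beta)_i\hat\sigma_i^z$, vanishes identically, which is the cleanest way to see that this flatness survives the full time-dependent evolution. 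Hence the outcome distribution $p_\vt(\xi)=\mathrm{Tr}[\hat\Pi_\xi\rho_{\rm f}(\vt)]$ satisfies $\tfrac{d}{ds}p_{\vt(s)}(\xi)|_{s=0}=0$ for every $\xi$, so the Fisher information $\mathcal F(\vt')$ of the preceding section obeys $\vec\beta^{\,T}\mathcal F(\vt')\vec\beta=0$ and, by positive semidefiniteness, $\mathcal F(\vt')\vec\beta=0$; i.e.\ $\ker G(\vt')\subseteq\ker\mathcal F(\vt')$.

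I would then finish by contradiction. Suppose $\vec\alpha\cdot\vec\beta\neq0$ for some $\vec\beta\in\ker G(\vt')$. By the generalized (pseudoinverse) Cram\'er--Rao bound, an unbiased estimator of $q$ with finite variance at $\vt'$ can exist only if $\vec\alpha=\grad q(\vt')$ lies in $\mathrm{range}\,\mathcal F(\vt')=(\ker\mathcal F(\vt'))^{\perp}$; since $\vec\beta\in\ker\mathcal F(\vt')$ by the previous step, this is violated, giving the contradiction. A more elementary packaging of the same point: differentiating the unbiasedness relation $\expectation{\tilde q}=q(\vt)$ along $\vt(s)$ at $s=0$, the left-hand side is $\int\!\mbox{d}\xi\,\tilde q(\xi)\,\tfrac{d}{ds}p_{\vt(s)}(\xi)|_{s=0}=0$ while the right-hand side is $\grad q(\vt')\cdot\vec\beta=\vec\alpha\cdot\vec\beta$, forcing $\vec\alpha\cdot\vec\beta=0$.

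The main obstacle is not the algebra but matching hypotheses: our assumption is \emph{asymptotic} unbiasedness together with \emph{arbitrarily small} (not exactly zero) variance in the double limit $t,\mu\to\infty$, whereas the textbook Cram\'er--Rao obstruction is stated for exactly unbiased, finite-variance estimators. I would therefore run the last step with the precise estimator definitions of the preceding section, passing to the limit in which $\expectation{\tilde q}\to q(\vt)$ and $\Var\tilde q\to0$ and using that the vanishing of the $\vec\beta$-derivative of $\rho_{\rm f}$ at $\vt'$ is an exact identity independent of $t$ and $\mu$, so the obstruction persists in the limit. A minor additional point is the interchange of $\tfrac{d}{ds}$ with the $\xi$-integral, which is covered by the standard regularity assumptions on the POVM and the estimator.
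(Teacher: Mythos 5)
Your proof is correct, but it takes a different route from the paper's. The paper's proof is a two-line reduction to machinery it has already built: it rescales $\vec\beta$ so that $\vec\alpha\cdot\vec\beta=1$, notes that this $\vec\beta$ is then an admissible choice in the bound of Eq.~(\ref{eqn:generalmsebound}), and observes that $\norm{G(\vt')\vec\beta}_1=0$ makes that lower bound on $\mathcal{M}$ infinite --- contradiction. You instead rebuild the obstruction from scratch as a classical identifiability statement: a kernel direction of $G(\vt')$ leaves every $f_i$, hence the full time-ordered evolution, hence every outcome distribution stationary to first order, so $\vec\beta\in\ker\mathcal{F}(\vt')$, and either the pseudoinverse Cram\'er--Rao bound or direct differentiation of the unbiasedness relation forces $\vec\alpha\cdot\vec\beta=0$. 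Both arguments rest on the same underlying fact (a null direction of $G$ is statistically invisible), but yours is self-contained --- it never invokes the seminorm bound of Eq.~(\ref{eq:boixo}) or the ``imaginary scenario'' construction used to derive Eq.~(\ref{eqn:generalmsebound}) --- and it is more explicit about the gap between exact and asymptotic unbiasedness, which the paper's appeal to ``MSE approaches infinity'' quietly elides (the quantum Cram\'er--Rao bound as stated in Eq.~(\ref{eq:boixo}) strictly applies to locally unbiased estimators). The price is length: the paper buys brevity by reusing its bound, while you buy independence from that bound and a cleaner handling of the hypotheses. Your final elementary packaging, $\int\mbox{d}\xi\,\tilde q(\xi)\,\tfrac{d}{ds}p_{\vt(s)}(\xi)\big|_{s=0}=0=\vec\alpha\cdot\vec\beta$, is arguably the most transparent version of the whole argument.
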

\begin{proof}
Proceeding by contradiction, suppose we have a $\vec\beta$ satisfying $G(\vt')\vec\beta = \vec 0$ [\textit{i.e.}~$\vec\beta$ is in the null space of $G(\vec\theta')$] and $\vec\alpha \cdot\vec\beta \neq 0$ (\textit{i.e.}~$\vec\beta$ is \emph{not} in the orthogonal complement of $\vec\alpha$). We can scale $\vec\beta$ by a constant to force $\vec\alpha \cdot \vec\beta = 1$ and maintain $G(\vt')\vec\beta = 0$. According to the bound in \cref{eqn:generalmsebound} of the main text, the MSE of any estimator of $q(\vt')$ then approaches infinity. Thus we can't make an asymptotically unbiased estimate with arbitrarily small variance, a contradiction. 
\end{proof}

\begin{thm}\label{thm:consistency}
If we can make an asymptotically unbiased estimate of $q(\vt')$ with arbitrarily small variance, then $G^T(\vec\theta')\vec w=\vec\alpha = \nabla q(\vt')$ is consistent.
\end{thm}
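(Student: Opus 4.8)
The plan is to recast the statement ``$G^T(\vt')\vec w=\vec\alpha$ is consistent'' (i.e.\ admits a solution $\vec w\in\mathbb{R}^d$) as a statement about orthogonal complements, and then quote \cref{thm:orthogonalcomplement} directly. First I would recall the elementary linear-algebra identity that, for any real $d\times k$ matrix $G$, one has $\operatorname{range}(G^T)=(\ker G)^\perp$; equivalently, the system $G^T\vec w=\vec\alpha$ has a solution $\vec w$ if and only if $\vec\alpha$ is orthogonal to every vector in the null space of $G(\vt')$. Thus it suffices to establish that $\vec\alpha\cdot\vec\beta=0$ for every $\vec\beta$ with $G(\vt')\vec\beta=\vec 0$.

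But this is precisely the conclusion of \cref{thm:orthogonalcomplement}. The hypothesis here --- that we can make an asymptotically unbiased estimate of $q(\vt')$ with arbitrarily small variance --- is verbatim the hypothesis of that theorem, so it applies and tells us that every $\vec\beta$ in the null space of $G(\vt')$ lies in the orthogonal complement of $\vec\alpha=\nabla q(\vt')$. Combining this with the identity above gives $\vec\alpha\in(\ker G(\vt'))^\perp=\operatorname{range}(G^T(\vt'))$, which is exactly the assertion that \cref{eqn:consistency} is consistent. This completes the argument.

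I do not expect a genuine obstacle: the proof is essentially a one-line deduction from \cref{thm:orthogonalcomplement}. The only points that require a little care are (i) invoking the correct form of the fundamental theorem of linear algebra (row space of $G$ equals the orthogonal complement of $\ker G$ in $\mathbb{R}^k$), and (ii) verifying that the hypotheses assumed in this theorem coincide with those of \cref{thm:orthogonalcomplement}, so that its conclusion may be quoted without modification.
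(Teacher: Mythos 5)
Your argument is correct and is essentially identical to the paper's own proof: both invoke \cref{thm:orthogonalcomplement} to place $\vec\alpha$ in the orthogonal complement of $\ker G(\vt')$, and then use the fundamental identity $(\ker G)^\perp=\operatorname{range}(G^T)$ to conclude that $G^T(\vt')\vec w=\vec\alpha$ is consistent. No gaps.
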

\begin{proof}
Theorem \ref{thm:orthogonalcomplement} implies that $\vec\alpha$ lies in the column space of $G^T(\vec\theta')$, as the null space of $G(\vt')$ is the orthogonal complement of the column space of $G^T(\vt')$. Thus the system $G^T(\vt')\vec w = \vec\alpha$ is consistent.  
\end{proof}
\begin{cor}\label{cor:singularvalues}
If we can make asymptotically unbiased estimates of  $\vec\theta'$ with arbitrarily small variance, then $G^T(\vec\theta')$ is full rank.
\end{cor}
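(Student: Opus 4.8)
The plan is to reduce the statement to $k$ applications of the preceding two theorems. The key observation is that the hypothesis — being able to make an asymptotically unbiased, arbitrarily-small-variance estimate of the vector $\vt'$ — means, component by component, that for each $m=1,\dots,k$ we can make such an estimate of $\theta'_m$. Each component $\theta_m$ is itself a (linear, hence analytic) function $q^{(m)}(\vt)=\theta_m$ of the parameters, with $\nabla q^{(m)}(\vt')=\vec e_m$, the $m$-th standard basis vector of $\mathbb{R}^k$.

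First I would invoke \cref{thm:consistency} with $q=q^{(m)}$ and $\vec\alpha=\vec e_m$: since $\theta'_m$ admits an asymptotically unbiased, arbitrarily-small-variance estimate, the linear system $G^T(\vt')\vec w=\vec e_m$ is consistent, so $\vec e_m$ lies in the column space of $G^T(\vt')$. As this holds for every $m=1,\dots,k$ and $\{\vec e_1,\dots,\vec e_k\}$ spans $\mathbb{R}^k$, the column space of $G^T(\vt')$ is all of $\mathbb{R}^k$, i.e.\ $\mathrm{rank}\,G^T(\vt')=k$. Because $G^T(\vt')$ is a $k\times d$ matrix with $d\ge k$, this is the maximal possible rank, so $G^T(\vt')$ is full rank. (Equivalently, one could apply \cref{thm:orthogonalcomplement} to each $q^{(m)}$: every $\vec\beta$ in the null space of $G(\vt')$ would be orthogonal to all $\vec e_m$, forcing $\vec\beta=\vec 0$, so $G(\vt')$ has trivial kernel and thus rank $k$ — the same conclusion.)

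There is essentially no obstacle here; the only point requiring care is the routine bookkeeping that estimating the vector $\vt'$ in the sense defined above legitimately yields an estimate of each scalar component $\theta'_m$ with the same asymptotic guarantees, after which the corollary is a one-line consequence of \cref{thm:consistency}.
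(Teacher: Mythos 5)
Your proposal is correct and follows essentially the same route as the paper: treat each component $\theta'_m$ as the function $q^{(m)}(\vt)=\theta_m$ with gradient $\vec e_m$, apply \cref{thm:consistency} (equivalently \cref{thm:orthogonalcomplement}) to conclude $G^T(\vt')\vec w=\vec e_m$ is consistent for every $m$, and hence that $G^T(\vt')$ has full rank. Your version just spells out the column-space bookkeeping slightly more explicitly than the paper does.
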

\begin{proof}
If we can make an estimate of $\vec\theta'$, we can think of this as making an estimate of $q(\vt')=\theta'_i$ for any $i$. Therefore, by Thm. \ref{thm:orthogonalcomplement}, we have that $G^T(\vec\theta')\vec w = \vec e_i$ is consistent for any element $\vec e_i$ in the standard basis of $\mathbb{R}^k$, which implies that $G^T(\vec\theta')$ is full rank.
\end{proof}

\section{Optimal protocol: case of analytic functions}\label{app:analyticprotocol}
\sectioninfo{We describe a protocol that saturates our bounds when \texorpdfstring{$\vec f(\vec\theta)$}{filler}  and \texorpdfstring{$q(\vt)$ }{filler}   are analytic in the neighborhood of the true value \texorpdfstring{$\vt'$}{filler}.}

We use the results of the main text to generalize our optimal protocol to the case where both $\vec f(\vec\theta)$ and $q(\vt)$ are analytic in the neighborhood of the true value $\vt'$. Given a total time $t$, we consider a two-step protocol that extends the approach of Ref. \cite{Qian2019}. In the first step, we spend time $t_1=t^p$ with $p\in(1/2,1)$ to obtain an initial estimate $\tilde{\vec\theta}$ of the true value $\vt'$. We then linearize $q$ about $\tilde\vt$ to obtain
\begin{align}
\begin{split}
  \label{eq15}  q(\vt) \approx q(\tilde{\vt}) + \nabla q(\tilde{\vt})\cdot(\vt-\tilde{\vt}) =: \tilde{\vec{\alpha}}\cdot \vt + K,
\end{split}
\end{align}
where $\tilde{\vec \alpha}=\nabla q(\tilde{\vt})$ and $K$ is a constant with respect to $\vec{\theta}$. We will show in Sec.\ \ref{appendix:protocolasymptotics} that the error introduced by this approximation is negligible
if $\tilde{\vt}$ can be estimated with MSE $\mathcal{O}(1/t_1^2)$ in time $t_1$ (as can be done via phase estimation procedures like in Ref. \cite{kimmel2015robust}). After having obtained $\tilde{\vt}$ in the first step, we can compute $K$. 

In the second step, we estimate the remaining linear term $\tilde{\vec\alpha}\cdot \vt$ in Eq.\ (\ref{eq15}) in time $t_2=t-t_1$. Define $G(\tilde{\vec{\theta}})$ as in Eq.\  (\ref{eqn:gradientmatrix}) of the main text with $\vec{\theta}'\to \tilde{\vec{\theta}}$. Then following the procedure of the linear case protocol in the main text, we measure a linear function $\lambda$ such that the corresponding estimate $\tilde\lambda$ is an asymptotically unbiased estimate of $\tilde{\vec \alpha}\cdot \vec\theta$. In particular, here we have $\lambda = \tilde{\vec{w}}\cdot(\vec{f}(\vt)-\vec{C})$, where the constant vector $\vec{C}$ is chosen in such a way that $\vec{f}(\vt)-\vec{C}=G(\tilde{\vt})\vt +\mathcal{O}(\vec\Delta)$, with $\vec\Delta:=\tilde\vt-\vt'$ and $\tilde{\vec w}$ a vector of weights that we still need to choose. With $\vec C$ defined in this way, we linearize $f_i(\vt)$ about $\tilde\vt$ and obtain
\begin{equation}\label{eqn:lambdageneral}
   \lambda \approx \tilde{\vec w}\cdot (G(\tilde\vt)\vt) = (G(\tilde\vt)^T\tilde{\vec w}) \cdot \vt.
\end{equation}
Similar to the linear-case protocol in the main text, we ensure that $\tilde{\lambda}$ estimates $q$ by choosing $\tilde{\vec w}$ to satisfy  $G(\tilde\vt)^T\tilde{\vec w} = \tilde{\vec \alpha}$, which we show in \cref{appendix:convergence} to be a consistent system of equations. We then solve the corresponding protocol problem to obtain the optimal vector $\tilde{\vec w}$, given $G(\tilde{\vec\theta})$ and $\tilde{\vec\alpha}$.

Combining steps one and two yields an estimator for $q(\vec{\theta})$. In \cref{appendix:protocolasymptotics}, we show that the MSE for such a protocol is asymptotically equal to the linear case in \cref{eqn:linearprotocolvariance}. 
The crucial point is that the process of linearizing $\vec f$ and $q(\vt)$ about $\tilde \vt$ introduces asymptotically negligible corrections compared to linearizing about the true value $\vt'$. Consequently, asymptotically, 
\begin{equation}\label{eq:protocolfinal}
    \mathcal{M}\sim\frac{\norm{\tilde{\vec w}}_\infty}{t_2^2} \sim \frac{\norm{\vec w}_\infty}{t^2},
\end{equation}
where we have used $t_2\sim t$, and $\vec w$ is the optimal weight vector obtained from the protocol problem for $G = G(\vt')$ and $\vec\alpha = \nabla q(\vt')$. Referring to Eq.\ (\ref{eqn:gradientmatrix}) and the preceding discussion in the main text, we recall that $G$ and $\vec\alpha$ defined this way are precisely the appropriate input to the bound problem in order to obtain our ultimate MSE bound. 
As, asymptotically, our protocol for an analytic objective and field yields an MSE equivalent to the fully linear case, the same proofs as in the truly linear case hold,  and the asymptotic bound obtained by solving the protocol problem in Eq.\ (\ref{eq:protocolfinal}) is equivalent to the sharpest bound obtained by solving the corresponding bound problem -- therefore the protocol is asymptotically optimal.

\section{Proofs on estimate asymptotics}\label{appendix:convergence}
\sectioninfo{We prove that, in the context of the general two-step protocol of \cref{app:analyticprotocol}, using the estimate \texorpdfstring{$\tilde{\vt}$}{filler} obtained from the first step of the protocol, as opposed to the unknown true value \texorpdfstring{$\vt'$}{filler}, asymptotically yields negligible errors in the determination of the weight vector \texorpdfstring{$\vec w$}{filler}, which is the solution of the protocol problem. } 

In this section, we prove that, in the two-step protocol of \cref{app:analyticprotocol}, using the estimate $\tilde\vt$ obtained from the first step of the protocol, as opposed to the unknown true value $\vt'$,  asymptotically yields negligible errors when compared to the determination of the weight vector  $\vec w$ that is the solution of the protocol problem.

Recall that we use time $t_1=t^{p}$ for $1/2<p<1$ on the first step of the protocol to obtain an estimate of each $\vt_i$ with MSE $\mathcal O\left(\frac{1}{t_1^2}\right)$. We then spend time $t_2=t-t_1$ estimating $q(\vt')$ via a linearization of $q(\vt)$ about our estimate $q(\tilde\vt)$ with a weighted linear (in $\vt$) protocol. We begin by assuming that our initial estimate $\tilde\vt$ satisfies
\begin{equation}\label{eq:errorsize}
 \norm{\tilde\vt-\vt'}=\norm{\vec\Delta}\leq \delta 
\end{equation} 
for some fixed positive real $\delta$, where we defined $\vec\Delta:=\tilde\vt-\vt'$. From here on, norms without subscripts denote the Euclidean norm. This means we assume $\tilde\vt$ lies within or on a ball of radius $\delta$ of the true value $\vt'$ in the parameter space of $\vt$'s. 
Recall that we also require that both $q(\vt)$ and $f_i(\vt)$ $\forall i$ are analytic within this ball for some $\delta$. Crucially, asymptotically in time $t$, we may make $\delta$ an arbitrarily small fixed positive number. That is, as the total time $t\longrightarrow \infty$, the time spent obtaining our estimate $t_1=t^p\longrightarrow\infty$, and therefore the MSE of our estimate $\tilde\vt$ goes to zero. 

We now prove the following theorem, which guarantees that, asymptotically, $G(\tilde\vt)$, as defined in Eq.\ (\ref{eqn:gradientmatrix}) of the main text with $\vt'\rightarrow \tilde\vt$, has full rank.
\begin{thm}\label{thm:consistency}
Given a $d\times k$ matrix 
\[
G(\vt) = \begin{pmatrix} \frac{\partial f_1(\vt)}{\partial\theta_1} & \dots & \frac{\partial f_1(\vt)}{\partial\theta_{k}} \\
\vdots & \ddots & \vdots\\
\frac{\partial f_d(\vt)}{\partial\theta_1} & \dots & \frac{\partial f_d(\vt)}{\partial\theta_{k}}
\end{pmatrix}
\]
and an estimate $\tilde\vt$, asymptotically for $t\longrightarrow \infty$, $G(\tilde\vt)$ has full rank.
\end{thm}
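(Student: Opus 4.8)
The plan is to reduce the claim to the openness of the full-rank condition: \cref{cor:singularvalues} already tells us that $G(\vt')$ has full column rank at the true parameter value, and since the entries of $G(\vt)$ are analytic (hence continuous) near $\vt'$, this property survives sufficiently small perturbations of the argument; the first step of the two-step protocol of \cref{app:analyticprotocol} makes the perturbation $\vec\Delta=\tilde\vt-\vt'$ arbitrarily small as $t\to\infty$, so $G(\tilde\vt)$ inherits full rank.

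Concretely, I would first invoke \cref{cor:singularvalues}: because we assume $\vt'$ can be estimated with arbitrarily small variance, $G^T(\vt')$ is full rank, equivalently $G(\vt')$ has rank $k$ (so in particular $d\ge k$). Let $\sigma>0$ be the smallest singular value of $G(\vt')$, i.e.\ the square root of the smallest eigenvalue of the positive-definite matrix $G(\vt')^TG(\vt')$. Next, each $f_i$ is analytic on the ball $\norm{\vt-\vt'}\le\delta$ appearing in \cref{eq:errorsize}, so every entry $G_{im}(\vt)=\partial f_i/\partial\theta_m(\vt)$ is analytic there; hence $\vt\mapsto G(\vt)^TG(\vt)$ is continuous, and so are its eigenvalues (e.g.\ by Weyl's perturbation inequality). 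Therefore there is a radius $\delta_0\in(0,\delta]$ such that the smallest singular value of $G(\vt)$ exceeds $\sigma/2>0$ for every $\vt$ with $\norm{\vt-\vt'}\le\delta_0$, and on this ball $G(\vt)$ has full rank.

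Finally, this is where the word ``asymptotically'' does its work: the first step spends time $t_1=t^p\to\infty$ and yields $\tilde\vt$ with MSE $\mathcal O(1/t_1^2)\to 0$, so, as explained around \cref{eq:errorsize}, for sufficiently large $t$ we may take $\norm{\vec\Delta}\le\delta_0$. Then $\tilde\vt$ lies in the ball on which $G$ has full rank, and the theorem follows.

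I expect the only real subtlety to be bookkeeping rather than mathematics: $\tilde\vt$ is a random estimate, so the precise meaning of ``$G(\tilde\vt)$ has full rank asymptotically'' is that this holds on the event $\norm{\vec\Delta}\le\delta_0$, whose probability tends to one as $t\to\infty$ (or, in the variance-based sense used throughout this section, that $\delta$ can be driven below $\delta_0$). Once that convention is fixed, the argument is just continuity of singular values together with \cref{cor:singularvalues}, and no genuinely hard estimate is required.
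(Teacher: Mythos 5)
Your proposal is correct and follows essentially the same route as the paper: both invoke \cref{cor:singularvalues} to get a strictly positive minimum singular value $\sigma$ of $G(\vt')$, bound the perturbation $G(\tilde\vt)-G(\vt')$ by $\mathcal{O}(\vec\Delta)$ using analyticity of the $f_i$, and conclude via the standard fact that a perturbation smaller than $\sigma$ cannot change the rank (your appeal to Weyl's inequality is the same perturbation bound the paper cites from matrix perturbation theory). Your extra remark about the probabilistic meaning of ``asymptotically'' is a reasonable clarification that the paper leaves implicit.
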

\begin{proof}
From Theorem \ref{cor:singularvalues}, we know that all singular values of $G(\vt')$ are nonzero, and thus the matrix has full rank. Let $P$ be a perturbation matrix such that $G(\tilde\vt)=G(\vt')+P$. We expand $f_i(\tilde\vt)$ about $\vt'$ as
\begin{equation}\label{eq:phiexpansion}
    f_i(\tilde\vt)=f_i(\vt')+\nabla f_i(\vt')\cdot\vec\Delta+\dots
\end{equation}
Thus,
\begin{equation}
\begin{split}
    G(\tilde\vt)_{in}=\frac{\partial 
    f_i(\tilde\vt)}{\partial\theta_n}&=\frac{\partial f_i(\vt')}{\partial\theta_n}+\mathcal{O}(\vec\Delta)\\
    &=G(\vt')_{in}+\mathcal{O}(\vec\Delta),
\end{split}
\end{equation}
which implies $P_{in}=\mathcal{O}(\vec\Delta)$ and therefore, as $\dim(\vec\Delta)=k=\mathcal{O}(1)$, $\norm{P}=\mathcal{O}(\vec\Delta)$. It is a well-known result in matrix perturbation theory (see e.g.~Ref.~\cite{dahleh2004lectures}) that, if $\norm{P}<\sigma$, where $\sigma$ is the minimum singular value of $G(\vt')$, then $G(\tilde\vt)$ has the same rank as $G(\vt')$, i.e.~full rank. Since  asymptotically we can make $\delta$ arbitrarily small in \cref{eq:errorsize}, we can also make $\norm{P}$ arbitrarily small; therefore, since $\sigma>0$,  we are guaranteed to satisfy this condition asymptotically. Thus, asymptotically, all singular values of $G(\tilde\vt)$ are nonzero and the matrix has full rank.
\end{proof}

Now we consider the solutions to $G^T(\vt')\vec w=\vec\alpha$ as compared to $G^T(\tilde\vt)\tilde{\vec w}=\tilde{\vec\alpha}$. We begin by restating a useful result from Ref.~\cite{cadzow1973finite}, labeled there as Fundamental Theorem 2, with our notation.
\begin{thm}\label{thm:orthogonality}
Given the $k\times d$ matrix $G^T$ with rank $k$ and the $k\times 1$ vector $\vec\alpha$, there exists a $k\times 1$ vector $\vec{v^0}$ such that 
\[
\vec\alpha\cdot\vec{v^0}=\max\limits_{\norm{G\vec v}_1\leq 1}\vec\alpha\cdot\vec v=\max\limits_{\norm{G\vec v}_1= 1}\vec\alpha\cdot\vec v,
\]
and at least $k-1$ components of $G\vec{v^0}$ are zero, that is:
\[
   \vec{g_i}\cdot\vec{v^0}=0\quad \mathrm{ for }\quad  i\in\left[i_1, i_2, \cdots, i_{k-1}\right] \quad \mathrm{ with }\quad 1\leq i_\ell\leq d,
\]
where $\vec{g_i}$ denotes the $i^{\mathrm{th}}$ column of $G^T$. Furthermore, the set of vectors 
\[
\left[\vec{g_{i_1}}, \vec{g_{i_2}}, \cdots, \vec{g_{i_{k-1}}}\right]
\]
is linearly independent.
\end{thm}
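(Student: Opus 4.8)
\emph{Proof plan.} The plan is to read the maximization as a linear program over the polytope $P:=\{\vec v\in\mathbb{R}^k:\norm{G\vec v}_1\le 1\}$ and to extract the claimed structure from a maximizing vertex. Since $G^T$ has rank $k$, the matrix $G$ is injective, so $\vec v\mapsto\norm{G\vec v}_1$ is a genuine norm on $\mathbb{R}^k$; hence $P$ is a bounded, centrally symmetric, full-dimensional polytope, cut out by the finitely many linear inequalities $\bigl(\sum_{i=1}^d\epsilon_i\vec{g_i}\bigr)\cdot\vec v\le 1$ indexed by sign patterns $\epsilon\in\{-1,+1\}^d$. If $\vec\alpha=\vec 0$ the statement is immediate (take $\vec{v^0}=\vec 0$ and use $\operatorname{rank}G=k$ to pick $k-1$ independent $\vec{g_i}$), so assume $\vec\alpha\ne\vec 0$. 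Then $\max_{\vec v\in P}\vec\alpha\cdot\vec v>0$ -- choose any $\vec v$ with $\vec\alpha\cdot\vec v>0$ and rescale it into $P$ -- so every maximizer lies on the boundary $\norm{G\vec v}_1=1$, which already yields the stated chain of equalities. Since a linear functional on a bounded polytope attains its maximum at a vertex, I would take $\vec{v^0}$ to be a vertex of $P$ achieving the maximum.

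Next I would characterize this vertex locally. Put $Z=\{i:\vec{g_i}\cdot\vec{v^0}=0\}$ with complement $N$; note $N\ne\emptyset$, since $N=\emptyset$ would give $G\vec{v^0}=\vec 0$, hence $\vec{v^0}=\vec 0$, contradicting $\vec\alpha\cdot\vec{v^0}>0$. For $i\in N$ set $\epsilon_i=\operatorname{sign}(\vec{g_i}\cdot\vec{v^0})$ and let $\vec b=\sum_{i\in N}\epsilon_i\vec{g_i}$. Because $\abs{\vec{g_i}\cdot\vec{v^0}}>0$ for $i\in N$, the sign pattern $(\epsilon_i)_{i\in N}$ is constant on a neighborhood of $\vec{v^0}$, so for all sufficiently small $\vec t$ one has $\norm{G(\vec{v^0}+\vec t)}_1=1+\vec b\cdot\vec t+\sum_{i\in Z}\abs{\vec{g_i}\cdot\vec t}$. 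If some nonzero $\vec t$ had $\vec{v^0}\pm s\vec t\in P$ for all small $s>0$, then adding the two resulting inequalities $\pm\vec b\cdot\vec t+\sum_{i\in Z}\abs{\vec{g_i}\cdot\vec t}\le 0$ would force $\vec{g_i}\cdot\vec t=0$ for every $i\in Z$ and then $\vec b\cdot\vec t=0$. Since $\vec{v^0}$ is a vertex, it cannot be the midpoint of such a segment, so no such nonzero $\vec t$ exists; equivalently, $\{\vec{g_i}\}_{i\in Z}\cup\{\vec b\}$ spans $\mathbb{R}^k$.

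To finish, observe that spanning $\mathbb{R}^k$ after adjoining the single vector $\vec b$ forces $\dim\operatorname{span}\{\vec{g_i}\}_{i\in Z}\ge k-1$. Choosing a linearly independent subset $\{\vec{g_{i_1}},\dots,\vec{g_{i_{k-1}}}\}$ of $\{\vec{g_i}:i\in Z\}$ produces the desired indices: they are linearly independent by construction, and $\vec{g_{i_\ell}}\cdot\vec{v^0}=0$ for each $\ell$ because $i_\ell\in Z$. In particular at least $k-1$ components of $G\vec{v^0}$ vanish, completing the proof.

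The step I expect to be the main obstacle is the rigorous handling of the vertex condition: justifying that a maximizing vertex exists (the optimal face is a nonempty bounded polytope, hence has a vertex, which is a vertex of $P$), and, more importantly, correctly identifying the lineality space of the tangent cone at $\vec{v^0}$ with $\operatorname{span}\{\vec{g_i}\}_{i\in Z}^{\perp}\cap\vec b^{\perp}$ via the piecewise-linear expansion of $\norm{G\vec v}_1$ written above. One must be careful that the sign pattern $(\epsilon_i)_{i\in N}$ really is constant on a neighborhood of $\vec{v^0}$, and treat the degenerate case $\vec b=\vec 0$ (when the expansion forces $\{\vec{g_i}\}_{i\in Z}$ alone to span $\mathbb{R}^k$, which only strengthens the conclusion). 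Everything else is routine linear algebra and convex-geometry bookkeeping.
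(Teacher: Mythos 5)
Your proof is correct, but it is worth noting that the paper does not actually prove this statement at all: it is restated verbatim (in the paper's notation) as ``Fundamental Theorem 2'' of Cadzow's 1973 paper on minimum $\ell^\infty$ solutions of consistent linear systems, and the citation is the entire justification. Your argument is therefore a genuinely different contribution: a self-contained convex-geometry proof that views $P=\{\vec v:\norm{G\vec v}_1\le 1\}$ as the unit ball of the norm $\vec v\mapsto\norm{G\vec v}_1$ (a polytope, since the constraint is the maximum of the $2^d$ linear functionals $\left(\sum_i\epsilon_i\vec{g_i}\right)\cdot\vec v$), takes a maximizing vertex, and reads off the active set $Z$ from the local piecewise-linear expansion $\norm{G(\vec{v^0}+\vec t)}_1=1+\vec b\cdot\vec t+\sum_{i\in Z}\abs{\vec{g_i}\cdot\vec t}$. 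All the steps check out: injectivity of $G$ from $\mathrm{rank}\,G^T=k$ gives boundedness; positivity of the optimum puts maximizers on the boundary and yields the chain of equalities; the midpoint characterization of extreme points correctly converts ``$\vec{v^0}$ is a vertex'' into ``$\{\vec{g_i}\}_{i\in Z}\cup\{\vec b\}$ spans $\mathbb{R}^k$''; and adjoining a single vector to reach a spanning set forces $\dim\mathrm{span}\{\vec{g_i}\}_{i\in Z}\ge k-1$, from which a linearly independent $(k-1)$-subset is extracted. (Two minor observations: $\vec b\cdot\vec{v^0}=\norm{G\vec{v^0}}_1=1$, so the degenerate case $\vec b=\vec 0$ you flag never actually occurs at an optimal vertex; and the $\vec\alpha=\vec 0$ case is irrelevant to the paper's application, where $\vec\alpha=\nabla q(\vt')$ is assumed nonzero, though handling it costs nothing.) What your approach buys is a short, verifiable proof in place of an external reference whose framing (a finite algorithm derived via duality) is considerably heavier than what the paper needs from the result.
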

\noindent This theorem is about the protocol problem. That is, $\vec v^0$ is the solution vector to the protocol problem. Furthermore, we recall that, by strong duality,  
\begin{equation}
u''=\vec\alpha\cdot\vec{v^0}=\min_{G^T(\vt')\vec w=\vec\alpha}\norm{\vec w}_\infty=u'.
\end{equation}

We now compare how the solution of the protocol problem is perturbed by considering $G^T(\tilde\vt)$ and $\tilde{\vec\alpha}$ as opposed to $G^T(\vt')$ and $\vec\alpha$. To that end, we prove the following theorem.

\begin{thm}\label{thm:wbound}
Consider the linear systems of equations $G^T(\vt')\vec w=\vec\alpha$ and $G^T(\tilde\vt)\tilde{\vec w}=\tilde{\vec\alpha}$, where we recall $\vec\alpha=\nabla q(\vt')$ and $\tilde{\vec\alpha}=\nabla q(\tilde\vt)$. Then
\[
\norm{\tilde{\vec w}}_\infty = \norm{\vec w}_\infty+\mathcal{O}(\vec\Delta),
\]
where $\vec w$ is the solution to the protocol problem with $G(\vt')$ and $\vec\alpha$ and $\tilde{\vec w}$ is the solution to the protocol problem with the approximate $G(\tilde\vt)$ and $\tilde{\vec\alpha}$.
\end{thm}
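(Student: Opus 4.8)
The plan is to pass to the dual protocol problems via strong duality and show that their optimal values differ by $\mathcal{O}(\vec\Delta)$. By strong duality, $\norm{\vec w}_\infty = u'' = \max\{\vec\alpha\cdot\vec v : \norm{G(\vt')\vec v}_1\le 1\}$, and since $G(\tilde\vt)$ is full rank (\cref{thm:consistency}) the perturbed system $G^T(\tilde\vt)\tilde{\vec w}=\tilde{\vec\alpha}$ is consistent and, again by strong duality, $\norm{\tilde{\vec w}}_\infty = \tilde u'' = \max\{\tilde{\vec\alpha}\cdot\vec v : \norm{G(\tilde\vt)\vec v}_1\le 1\}$. So it suffices to prove $|\tilde u'' - u''| = \mathcal{O}(\vec\Delta)$. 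First I would record the perturbation estimates: because $q$ and the $f_i$ are analytic near $\vt'$, the maps $\vt\mapsto G(\vt)$ and $\vt\mapsto\nabla q(\vt)$ are $C^1$, hence locally Lipschitz, so $G(\tilde\vt)=G(\vt')+P$ with $\norm{P}=\mathcal{O}(\vec\Delta)$ (exactly as in the proof of \cref{thm:consistency}) and $\tilde{\vec\alpha}=\vec\alpha+\mathcal{O}(\vec\Delta)$. Also note $u''=\Theta(1)$, since $u''=\norm{\vec w}_\infty\le\norm{\vec\alpha}/\sigma$ for the fixed nonzero $\vec\alpha$, where $\sigma>0$ is the least singular value of $G^T(\vt')$, positive by \cref{cor:singularvalues}.

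Next I would compare the two feasible bodies. From $\norm{G(\vt')\vec v}_1\ge\norm{G(\vt')\vec v}_2\ge\sigma\norm{\vec v}$, $\norm{G(\vt')\cdot}_1$ is a genuine norm on $\mathbb{R}^k$, and for every $\vec v$ one has $\bigl|\,\norm{G(\tilde\vt)\vec v}_1-\norm{G(\vt')\vec v}_1\,\bigr|\le\norm{P\vec v}_1\le C\norm{P}\norm{\vec v}\le (C\norm{P}/\sigma)\norm{G(\vt')\vec v}_1$. Writing $\epsilon:=C\norm{P}/\sigma=\mathcal{O}(\vec\Delta)$, this gives $(1-\epsilon)\norm{G(\vt')\vec v}_1\le\norm{G(\tilde\vt)\vec v}_1\le(1+\epsilon)\norm{G(\vt')\vec v}_1$, so the unit ball of $\norm{G(\tilde\vt)\cdot}_1$ sits inside $\tfrac1{1-\epsilon}$ times that of $\norm{G(\vt')\cdot}_1$, and every $\vec v$ feasible for either dual problem obeys $\norm{\vec v}=\mathcal{O}(1)$ uniformly in $\vec\Delta$. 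Take $\vec v$ attaining $\tilde u''$; then $\norm{G(\vt')\vec v}_1\le\tfrac1{1-\epsilon}$ and $\tilde{\vec\alpha}\cdot\vec v=\vec\alpha\cdot\vec v+(\tilde{\vec\alpha}-\vec\alpha)\cdot\vec v\le\vec\alpha\cdot\vec v+\mathcal{O}(\vec\Delta)\le\max\{\vec\alpha\cdot\vec v':\norm{G(\vt')\vec v'}_1\le\tfrac1{1-\epsilon}\}+\mathcal{O}(\vec\Delta)=\tfrac1{1-\epsilon}u''+\mathcal{O}(\vec\Delta)=u''+\mathcal{O}(\vec\Delta)$, using homogeneity of the linear objective and $u''=\Theta(1)$. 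The symmetric argument with $\vt'$ and $\tilde\vt$ exchanged — legitimate since $G(\tilde\vt)$ is also full rank with least singular value $\sigma-\norm{P}=\Theta(1)$ — yields $u''\le\tilde u''+\mathcal{O}(\vec\Delta)$. Combining the two gives $|\tilde u''-u''|=\mathcal{O}(\vec\Delta)$, which by the two duality identities is precisely $\norm{\tilde{\vec w}}_\infty=\norm{\vec w}_\infty+\mathcal{O}(\vec\Delta)$.

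I expect the main obstacle to be bookkeeping the uniformity of all the $\mathcal{O}(\vec\Delta)$ constants: the norm bound $\norm{G(\vt')\vec v}_1\ge\sigma\norm{\vec v}$, the boundedness of feasible $\vec v$, and the factor $\tfrac1{1-\epsilon}$ all need $\sigma$ and $\sigma-\norm{P}$ bounded away from zero, which is exactly what \cref{cor:singularvalues} and \cref{thm:consistency} guarantee once $t$ is large enough that $\norm{P}<\sigma$. An alternative route, closer in spirit to \cref{thm:orthogonality}, avoids the norm-ball comparison: let $\vec v^0$ be the Cadzow optimizer of the unperturbed dual problem, which by \cref{thm:orthogonality} spans the one-dimensional kernel of the $(k-1)\times k$ submatrix $G_S^T(\vt')$ for a suitable linearly independent column set $S$; the kernel of the $\mathcal{O}(\vec\Delta)$-perturbed submatrix $G_S^T(\tilde\vt)$ is then spanned by some $\tilde{\vec n}=\vec v^0+\mathcal{O}(\vec\Delta)$ (the spectral gap of $G_S^T(\vt')$ being $\Theta(1)$), so $\tilde{\vec n}/\norm{G(\tilde\vt)\tilde{\vec n}}_1$ is feasible for the perturbed dual problem with objective value $u''+\mathcal{O}(\vec\Delta)$; running this in both directions again pins $|\tilde u''-u''|=\mathcal{O}(\vec\Delta)$. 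Either way the perturbative inequalities are routine once the singular-value lower bounds are in hand; the only real content is organizing them.
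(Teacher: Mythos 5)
Your proof is correct, and it takes a genuinely different route from the paper's. Both arguments start the same way—pass to the dual protocol problems via strong duality and reduce to showing $|\tilde u''-u''|=\mathcal{O}(\vec\Delta)$—but from there they diverge. The paper works with the \emph{optimizer}: it invokes the Cadzow structure theorem (Theorem \ref{thm:orthogonality}) to write the dual solution as a generalized cross product of $k-1$ active columns of $G^T$, applies the Ipsen--Rehman determinant perturbation bound, and then must run a three-case analysis to rule out the possibility that the active column set changes under perturbation (their cases (2) and (3)). You instead work only with the optimal \emph{value}: since $G(\vt')$ has full column rank, $\norm{G(\vt')\cdot}_1$ is a norm equivalent to the Euclidean norm with constants controlled by the least singular value $\sigma$, so the two dual feasible bodies are nested within factors $1\pm\mathcal{O}(\vec\Delta)$ of each other, feasible points are uniformly bounded, and the two linear objectives differ by $\mathcal{O}(\vec\Delta)$ on that bounded set; a symmetric two-sided comparison finishes the argument. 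Your approach is shorter and more robust—it sidesteps entirely the question of whether the optimal active set migrates, which is exactly the delicate part of the paper's proof, and it needs no determinant perturbation machinery. What it gives up is the finer information the paper's case (1) extracts, namely that the optimizer itself satisfies $\vec{v^0}+\vec{\epsilon^0}$ with $\vec{\epsilon^0}=\mathcal{O}(\vec\Delta)$; but the downstream uses in the asymptotic analysis of $\mathcal{M}_1$ and $\mathcal{M}_2$ only require the value estimate, so your argument suffices for the role this theorem plays. The uniformity bookkeeping you flag is handled correctly: $\sigma>0$ by Corollary \ref{cor:singularvalues}, and $\sigma-\norm{P}$ stays bounded away from zero asymptotically, which is all the two-sided argument needs.
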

\begin{proof}
As in Theorem \ref{thm:consistency}, write $G(\tilde\vt)=G(\vt')+P$ with perturbation matrix $P$. Similarly, we define a perturbation vector $\vec p$ such that $\tilde{\vec\alpha}=\vec\alpha+\vec p$. We  expand $q(\tilde\vt)$ about $\vt'$ as
\begin{equation}\label{eq:qexpansion}
    q(\tilde\vt)=q(\vt')+\nabla q(\vt')\cdot\vec\Delta+\dots
\end{equation} 
We then have
\begin{equation}
\begin{split}
    \tilde{\vec\alpha}=\grad q(\tilde\vt)&=\grad q(\vt')+\mathcal{O}(\vec\Delta)=\vec\alpha+\mathcal{O}(\vec\Delta).
    \end{split}
\end{equation}
Therefore we have $\vec p=\mathcal{O}(\vec\Delta)$ and (from Theorem \ref{thm:consistency}) $P=\mathcal{O}(\vec\Delta)$. 

Now, due to strong duality, we conclude that the solution of the dual protocol problem is equal to that of the protocol problem. That is,
\begin{equation}\label{eq:approxw}
    \begin{split}
    \norm{\vec w}_\infty &= \vec\alpha\cdot\vec{v^0},\\ 
    \norm{\tilde{\vec w}}_\infty &= \tilde{\vec\alpha}\cdot(\vec{v^0}+\vec{\epsilon^0}),
    \end{split}
\end{equation}
for the unperturbed and perturbed problems, respectively. We have introduced $\vec\epsilon^0$ as the perturbation in $\vec{v^0}$ in the solution to the dual protocol problem in response to the perturbations $P$ and $\vec p$. 

Consider the solution to the unperturbed problem. We introduce the $k\times (k-1)$ matrix $M^T=\left(\vec{g_{i_1}}, \vec{g_{i_2}}, \cdots, \vec{g_{i_{k-1}}}\right)$ with columns as defined in Theorem \ref{thm:orthogonality}. By the same theorem, the solution vector satisfies
\begin{equation}
    M\vec{v^0}=\vec 0,
\end{equation}
where we note that $M$ is a submatrix of $G(\vt')$. Hence the solution vector must be jointly orthogonal to all columns of $M^T$. Via a $k$-dimensional generalization of the determinant formula for a cross product (obtained from Cramer's rule), we have the unnormalized solution vector
\begin{equation}\label{eq:formaldet}
    \vec{v}= \det \left[\begin{pmatrix}
    \vec{e} \\
    M
    \end{pmatrix}\right],
\end{equation}
where $\vec{e}=(\vec{e_1}, \vec{e_2}, \cdots \vec{e_k})^T$ represents a vector of the standard-basis vectors. Note that $\vec v$ is unique up to scalar multiplication. We then have component-wise
\begin{equation}\label{eq:formaldetcomp}
    \vec{v}_n=(-1)^{n+1}\det(M_n)\vec{e_n}
\end{equation}
where we naturally define $M_n$ as the unique $(k-1) \times (k-1)$ submatrix of $M$ that results from eliminating the first row and $n^\mathrm{th}$ column of the matrix in \cref{eq:formaldet}. We normalize the solution vector to the protocol problem so that it satisfies the condition in Theorem \ref{thm:orthogonality} as
\begin{equation}\label{eq:normsol}
    \vec{v^0}=\frac{\mathrm{sgn}(\vec\alpha\cdot\vec{v})}{\norm{G(\vt')\vec{v}}_1}
    \vec{v}=\mathcal{N}\vec{v},
\end{equation}
where we have implicitly defined the normalization factor $\mathcal{N}$.

We now consider the perturbed problem. Introduce two $k\times(k-1)$ matrices: $\overline{M}^T=\left(\vec{g_{j_1}}, \vec{g_{j_2}}, \cdots, \vec{g_{j_{k-1}}}\right)$, a submatrix of $G^T(\vt')$, and $\overline{Q}^T$, the corresponding submatrix of $P$. That is, $\overline{M}+\overline{Q}$ is a submatrix of $G(\tilde\vt)$. We pick indices $j_\ell\in [1, d]$ in accordance with Thm. \ref{thm:orthogonality} such that the solution vector $\vec{v^0}+\vec{\epsilon^0}$ of the perturbed protocol problem satisfies
\begin{equation}\label{eq:overbareq}
    (\overline{M}+\overline{Q})(\vec{v^0}+\vec{\epsilon^0})=\vec 0.
\end{equation}
Similar to the unperturbed case, Eq. (\ref{eq:overbareq}) has an unnormalized solution vector given by the determinant
\begin{equation}\label{eq:formaldetapprox}
  \vec{v}+\vec{\epsilon}=\det\left[
    \begin{pmatrix}
    \vec{e} \\
    \overline{M}+\overline{Q}
    \end{pmatrix}\right],
\end{equation}
which component-wise reads
\begin{equation} \label{eq:perturbcomp}
    \vec{v}_n+\vec{\epsilon}_n= (-1)^{n+1}\det(\overline{M}_n+\overline{Q}_n)\vec{e_n}.
\end{equation}
As in the unperturbed case, $\overline{M}_n+\overline{Q}_n$ is the submatrix of $\overline{M}+\overline{Q}$ corresponding to eliminating the first row and $n^{\mathrm{th}}$ column of the matrix inside the determinant of Eq.\ (\ref{eq:formaldetapprox}). The corresponding normalized solution vector to the perturbed protocol problem is
\begin{equation}
  \vec{v^0}+\vec{\epsilon^0}=\frac{\mathrm{sgn}(\tilde{\vec\alpha}\cdot(\vec{v}+\vec\epsilon))}{\norm{(G(\vt')+P)(\vec{v}+\vec\epsilon)}_1}(\vec{v}+\vec\epsilon)
=\overline{\mathcal{N}} (\vec{v}+\vec\epsilon),
\end{equation}
where we have implicitly defined the normalization factor $\overline{\mathcal{N}}$.

We now consider several cases for how the unperturbed and perturbed solution vectors are related: (1) $M=\overline{M}$, (2) $M\neq \overline{M}$ and $\overline{M}$ has full rank, (3) $M\neq \overline{M}$ and $\overline{M}$ does \emph{not} have full rank. Intuitively, case (1) corresponds to when the solution vectors of the unperturbed and perturbed protocol problems are orthogonal (via Thm. \ref{thm:orthogonality}) to the same set of columns of $G^T(\vt')$ and $G^T(\tilde\vt)$, respectively. That is, the solution vector in the perturbed case is merely a perturbed version of the solution vector in the unperturbed case. In cases (2) and (3), the set of columns of $G^T$ to which the unperturbed and perturbed solution vectors are orthogonal \emph{do not} have the same indices. Intuitively, this means that the perturbed problem solution vector is not simply the perturbed version of the solution vector in the unperturbed problem. These cases divide into two options. In case (2), this set of columns of $G^T(\tilde{\vt})$ in the perturbed case (given by the rows of $\overline{M}+\overline{Q}$) corresponds to a set of unperturbed columns of $G^T(\vt')$ (given by the rows of $\overline{M}$) that are independent -- i.e. $\overline{M}$ has full rank. In particular, this means that the unperturbed version of the perturbed solution vector is a \emph{candidate} solution to the unperturbed problem. By candidate solution we refer to the fact that any choice of independent columns of $G(\vt')$ could correspond to a \emph{possible} solution vector according to Thm. \ref{thm:orthogonality}, in the sense that any such choice picks out a candidate, unnormalized solution vector via Eq.\ (\ref{eq:formaldet}). There are at most $\binom{d}{k-1}$ such candidate solutions based on picking the set of $k-1$ columns that define a possible $M$. 
In case (3), $\overline{M}$ does not have full rank -- i.e. the corresponding unperturbed columns are not independent and the unperturbed version of the perturbed solution vector is \emph{not} a candidate solution vector to the unperturbed problem.

We now examine the cases one by one in detail and find that we can rule out cases (2) and (3). Starting with case (1), we may drop the bar on $\overline{M}$ as $M=\overline{M}$. We then use a bound on determinants of perturbed matrices from Ref.~\cite{ipsen2008perturbation} (see Remark 2.9 therein) and obtain
\begin{equation}
\abs{\det(M_n)-\det(M_n+\overline{Q}_n)}\leq s_{k-2}\norm{\overline{Q}_n},
\end{equation}
where $s_{k-2}\leq(k-1)\sigma_1\cdots\sigma_{k-2}$ is the $(k-2)^{\mathrm{nd}}$ elementary symmetric function in the singular values $\sigma_1\geq\cdots\geq\sigma_{k-1}$ of $M_i$ \cite{ipsen2008perturbation}. Importantly, $\sigma_1=\norm{M_n}\leq\sqrt{k-1}\norm{M_n}_\infty=\mathcal{O}(1)$ and $\norm{\overline{Q}_n}=\mathcal{O}(\vec\Delta)$ as $\norm{\overline{Q}_n}\leq\sqrt{\sum_{ab}{(\overline{Q}_n})_{ab}^2}$ and all elements of $\overline{Q}_n$ are of size $\mathcal{O}(\vec\Delta)$. Therefore,
\begin{equation}
    \abs{\det(M_n)-\det(M_n+\overline{Q}_n)}= \mathcal{O}(\vec\Delta),
\end{equation}
which directly implies that $\vec\epsilon=\mathcal{O}(\vec\Delta)$. 
Having established that $\vec \epsilon=\mathcal{O}(\vec\Delta)$, we now consider the normalization factors $\mathcal{N}$ and $\overline{\mathcal{N}}$. Recall that
\begin{equation}
    \abs{\overline{\mathcal{N}}}=\frac{1}{\norm{(G(\vt')+P)(\vec{v}+\vec\epsilon)}_1}.
\end{equation}
By the triangle inequality,
\begin{equation}
\begin{split}
    \norm{(G(\vt')\vec{v}}_1-\norm{(G(\vt')\vec\epsilon+P(\vec{v}+\vec\epsilon)}_1 \leq \norm{(G(\vt')+P)(\vec{v}+\vec\epsilon)}_1 \leq \norm{(G(\vt')\vec{v}}_1+\norm{(G(\vt')\vec\epsilon+P(\vec{v}+\vec\epsilon)}_1.
\end{split}
\end{equation}
Then, using a binomial expansion yields
\begin{equation}
    \abs{\mathcal{N}}-\mathcal{O}(\vec\Delta)\leq\abs{\overline{\mathcal{N}}}\leq\abs{\mathcal{N}}+\mathcal{O}(\vec\Delta),
\end{equation}
so $\overline{\mathcal{N}}=\mathcal{N}+\mathcal{O}(\vec\Delta)$, where we note that $\mathcal{N}=\mathcal{O}(1)$.
Therefore, for case (1), the perturbed solution vector is $\vec{v^0}+\mathcal{O}(\vec\Delta)$, and $\vec{\epsilon^0}=\mathcal{O}(\vec\Delta)$. Eq. (\ref{eq:approxw}) then yields
\begin{equation}
    \begin{split}
        \norm{\tilde{\vec w}}_\infty &= (\vec\alpha+\vec p)\cdot(\vec{v^0}+\vec{\epsilon^0})\\
        &=\vec\alpha\cdot\vec{v^0}+\mathcal{O}(\vec\Delta)\\
        &=\norm{{\vec w}}_\infty +\mathcal{O}(\vec\Delta).
    \end{split}
\end{equation}

We now demonstrate that neither case (2) nor case (3) can arise. Starting with case (2), we recall from the discussion above that $\overline{M}$ corresponds to a candidate solution for the original unperturbed problem. The corresponding candidate (unnormalized) solution vector to the unperturbed problem, $\overline{\vec v}$, can be found by an equation analogous to \cref{eq:formaldet}. By the same argument as in case (1), when we perturb $\overline{M}$ by $\overline{Q}$ to obtain the perturbed problem, the unperturbed candidate solution vector $\overline{\vec v}$ may only be perturbed by $\mathcal{O}(\vec\Delta)$ and similarly the corresponding candidate solution value may also only be perturbed by $\mathcal{O}(\vec\Delta)$. 
Let the difference between the candidate solution corresponding to $\overline{M}$ and the true solution to the unperturbed problem be given by $r$. As asymptotically we can make $\vec\Delta$ arbitrarily small we may always make $\norm{\vec\Delta} \ll r$. This contradicts the fact that the solution vector $\overline{M}+\overline{Q}$ is the solution to the perturbed problem as an approach like case (1) is guaranteed to offer a better solution than case (2) for sufficiently small perturbations. Therefore case (2) cannot arise.

Similarly, we can show that case (3) cannot arise. In this case, $\overline{M}$ is rank-deficient and consequently does not correspond to a candidate solution to the original unperturbed problem. Also, due to its rank deficiency, we know there exists a linear combination of rows via $\mathcal{O}(1)$ coefficients such that we may transform $\overline{M}$ using row operations into a form that it has a row of all zeros. Call this transformation $T$. We then have that $T(\overline{M}+\overline{Q})$ has a row with all elements of size $\mathcal{O}(\vec\Delta)$.
Consider $\det(\overline{M}_n+\overline{Q}_n)=\det(T(\overline{M}_n+\overline{Q}_n))$ as in Eq.\ (\ref{eq:perturbcomp}), where we eliminate the $n^{\mathrm{th}}$ column of $T(\overline{M}+\overline{Q})$ to obtain $T(\overline{M}_n+\overline{Q}_n)$. Eliminating this column does not change the fact that $T(\overline{M}_n+\overline{Q}_n)$ has a row with all elements of size $\mathcal{O}(\vec\Delta)$. Consequently, $\det(\overline{M}_n+\overline{Q}_n)=\det(T(\overline{M}_n+\overline{Q}_n))=\mathcal{O}(\vec\Delta)$ and, therefore, all components of the \emph{unnormalized} perturbed problem solution vector must be $\mathcal{O}(\vec\Delta)$.
Let this unnormalized solution vector be $\overline{\vec v}$. From before, we have $\overline{\mathcal{N}}=\mathcal{N}+\mathcal{O}(\vec\Delta)$ and $\tilde{\vec\alpha}=\vec\alpha+\mathcal{O}(\vec\Delta)$, so the solution corresponding to $\overline{\vec v}$ is 
\begin{equation}
    \mathcal{\overline{N}}(\tilde{\vec\alpha}\cdot \overline{\vec v})=
    \mathcal{N}(\vec\alpha\cdot\overline{\vec v})+\mathcal{O}(\vec\Delta)=\mathcal{O}(\vec\Delta).
\end{equation}
In the second equality we used that $\mathcal{N}=\mathcal{O}(1)$ and $\overline{\vec v}=\mathcal{O}(\vec\Delta)$. Furthermore, let $\vec v$ be the unnormalized solution vector for the unperturbed problem, then
\begin{equation}
    \mathcal{N}(\vec\alpha\cdot\overline{\vec v})\leq  \mathcal{N}(\vec\alpha\cdot\vec v),
\end{equation}
which implies that, asymptotically,
\begin{equation}
    \overline{\mathcal{N}}(\tilde{\vec\alpha}\cdot\overline{\vec v})\lesssim  \mathcal{N}(\vec\alpha\cdot\vec v).
\end{equation}
The right-hand side of this inequality is $\mathcal{O}(1)$, whereas the left-hand side is $\mathcal{O}(\vec\Delta)$. Therefore, asymptotically, the perturbed solution is no longer within $\mathcal{O}(\vec\Delta)$ of the solution to the unperturbed problem. As a result, asymptotically, an approach like case (1) is guaranteed to result in a better candidate solution vector than a candidate arising from case (3). Thus case (3) cannot lead to a solution, concluding the proof.

\end{proof}

Theorem \ref{thm:wbound} has immediate consequences that we use in evaluating our protocol performance and in the proofs in \cref{appendix:protocolasymptotics}. It also implies the following useful corollary.
\begin{cor}\label{cor:constantbound}
$\norm{\tilde {\vec w}}$ can,  asymptotically, be upper bounded by a constant.
\end{cor}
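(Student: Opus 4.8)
The plan is to obtain \cref{cor:constantbound} as essentially a one-line consequence of \cref{thm:wbound}. First I would fix notation: let $\vec w$ denote the solution of the protocol problem for the true data $G(\vt')$ and $\vec\alpha=\nabla q(\vt')$. By \cref{cor:singularvalues}, $G^T(\vt')$ has full rank, so the system $G^T(\vt')\vec w=\vec\alpha$ is consistent and the protocol problem attains a finite value $\norm{\vec w}_\infty=:c_0<\infty$; crucially, $c_0$ is a fixed number, independent of the total time $t$, once the field $\vec f$, the objective $q$, and the true value $\vt'$ are specified.

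Next I would invoke \cref{thm:wbound}, which states $\norm{\tilde{\vec w}}_\infty=\norm{\vec w}_\infty+\mathcal{O}(\vec\Delta)=c_0+\mathcal{O}(\vec\Delta)$. Since asymptotically in $t$ one can take $\norm{\vec\Delta}=\norm{\tilde\vt-\vt'}\le\delta$ for an arbitrarily small fixed $\delta>0$ (because $t_1=t^p\to\infty$ drives the MSE of the first-step estimate $\tilde\vt$ to zero), the correction term is controlled; in particular, for $t$ large enough it does not exceed $c_0$, so $\norm{\tilde{\vec w}}_\infty\le 2c_0$. Finally I would pass from the supremum norm to the Euclidean norm using equivalence of norms on $\mathbb{R}^d$: because the number of sensors $d=\mathcal{O}(1)$ is fixed, $\norm{\tilde{\vec w}}\le\sqrt d\,\norm{\tilde{\vec w}}_\infty\le 2\sqrt d\,c_0$, a constant, which is the assertion.

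Since this is a direct corollary, I do not expect a genuine obstacle; the only point requiring care is that every comparison constant stays $t$-independent. This holds provided $d$ and $k$ are treated as $\mathcal{O}(1)$ throughout — so that the factor $\sqrt d$ and the implied constant inside the $\mathcal{O}(\vec\Delta)$ of \cref{thm:wbound} do not secretly grow with $t$ — and provided $c_0<\infty$, which is exactly what consistency of the protocol problem (via \cref{cor:singularvalues}) guarantees.
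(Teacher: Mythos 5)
Your proposal is correct and follows essentially the same route as the paper: both deduce the bound directly from \cref{thm:wbound} together with the fact that $\norm{\vec\Delta}$ can asymptotically be made arbitrarily small, so that $\norm{\tilde{\vec w}}$ stays within a bounded neighborhood of the fixed, $t$-independent quantity $\norm{\vec w}$. Your extra steps (noting $c_0<\infty$ via consistency and converting from the supremum to the Euclidean norm with a $\sqrt{d}$ factor) are just explicit versions of details the paper leaves implicit.
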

\begin{proof}
$\norm{{\vec w}}=\mathcal{O}(1)$, and, asymptotically, $\delta$---which bounds $\norm{\vec\Delta}$ (see \cref{eq:errorsize})---can be made arbitrarily small. This directly implies the result.
\end{proof}

\section{Proof of protocol optimality}\label{appendix:protocolasymptotics}
\sectioninfo{We rigorously demonstrate that the two-step protocol described in \cref{app:analyticprotocol} is asymptotically optimal.} 

Using the results of the previous section, we rigorously demonstrate that the two-step protocol described in \cref{app:analyticprotocol} is optimal. In particular, we focus on the effects of using the estimate $\tilde\vt$ from step 1 of the protocol, as opposed to the true $\vt'$, for step 2 of the protocol and demonstrate that, asymptotically, the errors introduced are negligible.

We begin by sketching how the two-step protocol saturates the MSE bound in \cref{eqn:generalmsebound} and, therefore, yields an optimal estimate of the function $q(\vt')$. We then fill in the details to rigorously obtain the result. The MSE of the full protocol is given by
\begin{equation}
    \label{eqn:biasvariancegeneral}
    \begin{split}
        \mathcal{M}&=\mathbb{E}\left[(\tilde{q} - q(\vt'))^2\right] \\
        &=\mathcal{M}_1+ \mathcal{M}_2,
    \end{split}
\end{equation}
with 
\begin{align}
    \mathcal{M}_1 &= \mathbb{E}_{\tilde\vt}[\mathrm{Var}_{\tilde \lambda}[\tilde \lambda]],\\
    \mathcal{M}_2 &= \mathbb{E}_{\tilde\vt}\left[(q(\tilde\vt)+\lambda-\tilde{\vec\alpha}\cdot\tilde\vt- q(\vt'))^2\right].
\end{align}
The variance of the estimation of $\lambda$, for fixed $\tilde{\vt}$, is $\norm{\tilde{\vec w}}_\infty^2/t_2^2$ \cite{Eldredge2018}.  We then show that
\begin{equation}\label{eq:term1}
   \mathcal{M}_1= \frac{\norm{{\vec w}}_\infty^2}{t_2^2}\left(1+\frac{\mathcal{C}}{t_1}+\mathcal{O}\left(t_1^{-2}\right) \right),
\end{equation}
with some constant $\mathcal{C}$. Given $t_1=t^{p}$ with $p\in(1/2,1)$, we conclude that $\mathcal{M}_1$ is asymptotically given by $\frac{\norm{{\vec w}}_\infty}{t_2^2}\sim \frac{\norm{\vec w}_\infty}{t^2}$. On the other hand, we also show that $\mathcal{M}_2$ is of order $\mathcal{O}(t_1^{-4})$ and is therefore asymptotically negligible. Thus, we asymptotically have
\begin{equation}\label{eq:protocolfinalagain}
    \mathcal{M}\sim\frac{\norm{\vec w}_\infty}{t^2},
\end{equation}
where we recall that $\vec w$ is the optimal weight vector obtained from the protocol problem for $G = G(\vt'), \vec\alpha = \nabla q(\vt')$.

The point of the asymptotics is that the problem we actually solve in practice with our estimate $\tilde \vt$ introduces asymptotically negligible corrections. Via the same proofs as in the linear case, we know that the protocol problem used to obtain $\vec w$ gives a solution equivalent to the corresponding bound problem, and therefore the protocol is asymptotically optimal.
We now fill in the details and derive the asymptotic behavior of $\mathcal{M}_1$ and $\mathcal{M}_2$.

\paragraph{\textbf{Derivation of $\mathcal{M}_1$.}}
An immediate consequence of Theorem \ref{thm:wbound} is 
\begin{equation}\label{eq:bndtildeq}
\begin{split}
    \expect_{\tilde\vt}[\mathrm{Var}_{\tilde{\lambda}}[\tilde{\lambda}]]=\frac{\expect[\norm{\tilde{\vec w}}_\infty^2]}{t_2^2}= \frac{\norm{{\vec w}}_\infty^2}{t_2^2}\expect\left[\left(1+\mathcal{B}\vec\Delta\right)^2\right]
\end{split}
\end{equation}
for some constant $\mathcal{B}$. Note that we can expand
\begin{equation}
\expect[(1 + \mathcal B \norm{\vec\Delta})^2] = 1 + \mathcal B^2\expect[\norm{\vec\Delta}^2] + 2\mathcal B\expect[\norm{\vec\Delta}].
\end{equation}
Since $\expect[\norm{\vec\Delta}^2]$ is the sum of the squared deviations of the individual $\theta_i$, it is $\mathcal O\left(\frac 1{t_1^2}\right)$. Similarly, $\expect[\norm{\vec\Delta}]$ is $\mathcal O\left(\frac 1{t_1}\right)$. As a result, we can expand \cref{eq:bndtildeq} to
\begin{equation}
    \mathcal M_1=\expect_{\tilde\vt}[\mathrm{Var}_{\tilde \lambda}[\tilde \lambda]]= \frac{\norm{{\vec w}}_\infty^2}{t_2^2}\left(1+\frac{\mathcal{C}}{t_1}+\mathcal{O}\left(t_1^{-2}\right) \right)
\end{equation}
for some constant $\mathcal{C}$, in agreement with \cref{eq:term1}. 

\paragraph{\textbf{Derivation of $\mathcal{M}_2$.}} We begin with
\begin{equation}\label{eq:term2app}
    \mathcal{M}_2=\expect_{\tilde\vt}\left[(q(\tilde\vt)+\lambda-\tilde{\vec\alpha}\cdot\tilde\vt- q(\vt))^2\right].
\end{equation}
We define the vector $\vec C(\tilde\vt)$ to store constants which simplify our derivation through $C_i(\tilde\vt)=f_i(\tilde\vt)-\tilde\vt\cdot\nabla f_i(\tilde\vt)$. We then define
\begin{equation}\label{eq:lambdaapp}
    \lambda(\vec f)=\tilde{\vec w}\cdot(\vec f-\vec C(\tilde\vt))
\end{equation}
as the linear function we measure in step 2 of our protocol. Similar to \cref{eq:qexpansion}, we expand $q(\vt)$ about $\tilde\vt$:
\begin{equation}\label{eq:qexpansion2}
    q(\vt)=q(\tilde\vt)-\nabla q(\tilde\vt)\cdot\vec\Delta+\frac{T_2}{2}+\cdots,
\end{equation}
where $T_n=\mathcal{O}(\Delta^n)$. Inserting this expansion into \cref{eq:term2app}, where we use the definition of $\tilde{\vec\alpha}$, we find that $\mathcal M_2$ is equivalent to 
\begin{equation}\label{eq:term2}
\begin{split}
    \expect_{\tilde\vt}\left[\left(\lambda -\tilde{\vec\alpha}\cdot\tilde\vt+\tilde{\vec\alpha}\cdot\vec\Delta-\frac{T_2}{2}+\frac{T_3}{3}-\cdots\right)^2\right]=\expect_{\tilde\vt}\left[\left(\underbrace{\lambda -\tilde{\vec\alpha}\cdot\vt'}_{(*)}-\frac{T_2}{2}+\frac{T_3}{3}-\cdots\right)^2\right].
\end{split}
\end{equation}
To simplify the term labeled $(*)$, we insert \cref{eq:lambdaapp} and  use  $\tilde{\vec\alpha}\cdot\vt'=G^T(\tilde\vt)\tilde{\vec w}\cdot\vt'=\tilde{\vec w}\cdot G(\tilde\vt)\vt'$. This yields
\begin{equation}\label{eq:*}
    \lambda -\tilde{\vec\alpha}\cdot\vt=\tilde{\vec w}\cdot\left(\vec f-\vec C(\tilde\vt)-G(\tilde\vt)\vt'\right).
\end{equation}
Now consider the quantity $\vec f-C(\tilde\vt)-G(\tilde\vt)\vt'$. For the $i^{\mathrm{th}}$ component, we have
\begin{equation}\label{eq:componentexpansion}
    \vec f_i-C_i(\tilde\vt)-[G(\tilde\vt)\vt']_i=(f_i-f_i(\tilde\vt))+\nabla f_i(\tilde\vt)\cdot\vec\Delta,
\end{equation}
where we employed the definitions of $\vec C(\tilde\vt)$ and $G(\tilde\vt)$. Since $f_i(\vt)$ is an analytic function in a $\delta$-ball around $\vt'$, we can expanded it about $\vt'$ as in \cref{eq:phiexpansion}. Substituting this expansion into \cref{eq:componentexpansion} we arrive at
\begin{equation}
    \mathrm{\cref{eq:componentexpansion}}=\left( \nabla f_i(\tilde\vt)-\nabla f_i(\vt')\right)\cdot\vec\Delta-\frac{S_2}{2}+\mathcal{O}(\vec\Delta^3)
\end{equation}
with $S_n=\mathcal{O}(\Delta^n)$. From \cref{eq:phiexpansion}, we conclude that 
\begin{equation}
    \begin{split}
    \left(\nabla f_i(\tilde\vt)-\nabla f_i(\vt')\right)\cdot\vec\Delta=S_2+\mathcal{O}(\vec\Delta^3) = \mathcal{O}(\vec{\Delta}^2).
    \end{split}
\end{equation}
Therefore, \cref{eq:componentexpansion} is of order $\mathcal{O}(\vec{\Delta}^2)$. Furthermore, Theorem \ref{cor:constantbound} implies that, asymptotically, $\norm{\tilde{\vec w}}_\infty$ can be upper bounded by a constant, i.e.~the magnitude of each element of $\tilde{\vec w}$ is upper bounded by a constant. Combining these facts, we find that \cref{eq:*} is of order $\mathcal{O}(\vec{\Delta}^2)$. Together with \cref{eq:term2}, this, in turn, implies that $\mathcal M_2$ is $\mathcal{O}(\vec{\Delta}^4)=\mathcal{O}(t_1^{-4})$.

\section{Review of the protocol by Eldredge \textit{et.~al.}}
\sectioninfo{For completeness, we review one of the protocols for optimal estimation of a linear combination of parameters from Eldredge \textit{et.~al.}, which is used as a subroutine in our optimal protocol. } 

In this section, we briefly summarize one of the protocols for optimal estimation of a linear combination of parameters from Eldredge \textit{et.~al.}~\cite{Eldredge2018}, because it is as a subroutine in our protocols to obtain an estimate of Eq.\ (\ref{eq:lambda}) and Eq.\ (\ref{eqn:lambdageneral}) in the main text with variances given by Eq.\ (\ref{eqn:linearprotocolvariance}) and Eq.\ (\ref{eq:protocolfinal}), respectively. We seek to measure a linear combination (up to a constant shift) of the form 
\begin{equation}
\lambda=\vec w\cdot\vec f.    
\end{equation}
Several specific protocols to obtain the optimal MSE estimate of such a linear combination are given in Ref. \cite{Eldredge2018}. Here we present the first and simplest such protocol.

In this protocol, we suppose to have access to a time-dependent control over our evolution. We begin with a $d$-qubit GHZ input state of the quantum sensors given by 
\begin{equation}
    \ket{\psi_0}=\frac{1}{\sqrt{2}}\left(\ket{0}^{\otimes d}+\ket{1}^{\otimes d}\right).
\end{equation}
Under evolution by $\hat\sigma^z$ as in Eq.\ (\ref{eq:H}) in the main text, each qubit sensor accumulates a relative phase between the $\ket{0}$ and $\ket{1}$ states. We perform a partial time evolution so that each qubit sensor is evolved for a time proportional to the corresponding weight $w_i$ on that sensor. We realize this by applying $\hat\sigma_i^x$ to the $i^{\mathrm{th}}$ qubit at time $t_i=t(1+w_i)/2$. This results in an effective evolution of our state by the unitary
\begin{equation}
\hat{U}(t)=e^{-i\frac{t}{2}\sum_{i=1}^d w_if_i(\vt)\hat\sigma_i^z }.    
\end{equation}
We note that this scheme assumes that $w_i\in[-1,1]$ and that the largest $|w_i|$ is equal to 1. We can always achieve this by rescaling the vector $\vec w$. 
Under this unitary evolution, the final state of the qubits is
\begin{equation}
    \ket{\psi_f}=\frac{1}{\sqrt{2}}\left(e^{-it\lambda/2}\ket{0}^{\otimes d}+ e^{it\lambda/2}\ket{1}^{\otimes d}\right).
\end{equation}
We then make a measurement of the overall parity of the state using $\hat P=\bigotimes_{i=1}^d \hat{\sigma}_i^x$. Note that this measurement can be performed locally at each site. Furthermore, measurement of the expectation value $\langle\hat P\rangle(t)$ allows for estimation of $\lambda$ with the optimal accuracy given by Eq.\ (\ref{eqn:linearprotocolvariance}) \cite{wineland1994squeezed}. 

\let\oldaddcontentsline\addcontentsline
\renewcommand{\addcontentsline}[3]{}
\let\addcontentsline\oldaddcontentsline

\end{document}